\newtheorem{theorem}{Theorem}
\newtheorem{definition}[theorem]{Definition}
\newtheorem{claim}[theorem]{Claim}
\newtheorem{lemma}[theorem]{Lemma}
\newcommand{\mypara}[1]{\smallskip\noindent\textbf{#1.}}  
\newcommand{\mysubsec}[1]{\smallskip\noindent\textbf{\large #1.}}  
\begin{document}

\title{Leveraging Multiple Channels in Ad Hoc Networks\thanks{Partially supported by Iceland Research Foundation grants 120032011 and 152679-051, National Natural Science Foundation
of China grants 61073174, 61373027, 61402461 and HKU Small Project.}}
\author{Magn\'us M. Halld\'orsson\thanks{ICE-TCS, School of Computer Science, Reykjavik University, Iceland; \magnusmail.}
 \and Yuexuan Wang\thanks{College of Computer Science and Technology, Zhejiang University, Hangzhou, 310027, P.R. China.}
\thanks{Department of Computer Science, The University of Hong Kong, Hong Kong, P.R. China; \amymail.}
\and Dongxiao Yu\thanks{Services Computing Technology and System Lab, Cluster and Grid Computing Lab in the School of Computer Science and Technology, Huazhong University of Science and Technology, 1037 Luoyu Road, Wuhan 430074, P.R. China.; \dongxiaomail.}
}
\date{}
\maketitle

\begin{abstract}
We examine the utility of multiple channels of communication in wireless networks under the SINR model of interference. The central question is whether the use of multiple channels can result in linear speedup, up to some fundamental limit. We answer this question affirmatively for the data aggregation problem, perhaps the most fundamental problem in sensor networks. To achieve this, we form a hierarchical structure of independent interest, and illustrate its versatility by obtaining a new algorithm with linear speedup for the node coloring problem.
\end{abstract}

%
%
\setcounter{secnumdepth}{5}
\section{Introduction}

Diversity in wireless networks -- 
having multiple opportunities for communication 
-- is well known to decrease interference, increase reliability, and improve performance~\cite{DGKN12,DGKN11}.
The question is how much it helps and what the limits are to such improvements.
In particular, we seek an answer to the following fundamental question in the context of the SINR model:
\begin{quote}
\emph{
Can we speed distributed wireless algorithms up linearly with the number of channels, up to a fundamental limit?
}
\end{quote}
Thus, we are interested in the fundamental limits of the benefits of diversity.

We focus our attention on data dissemination problems, 
in particular \emph{data aggregation}, sometimes referred to as the ``killer-app'' for sensor networks:
compute a compressible function (e.g., average) of values stored at the nodes~\cite{LHWL10}.

Multiple channels can be available by modulation ranging over frequencies or phases.
They can also be simulated by time-division multiplexing (TDMA) by assigning time slots to the different channels. The converse does not hold, however, as multiple channels are a strictly more constrained form of communication. Namely, whereas nodes can listen (and even choose to send) in all slots of a TDMA schedule, they can only listen on one of the channels. Thus, multiple channels can be viewed as a form of \emph{parallelism} in wireless communication and our inquiry involves the parallelizability of fundamental wireless tasks.

Multiple channels have been found to yield linear speedups in graph-based models, such as for 
broadcast \cite{DGKN11}, minimum dominating sets \cite{DKN12}, leader election~\cite{DGKN12} and maximal independent sets \cite{DGGKN13}. In contrast, essentially the only work on multiple channels in the \emph{signal-to-interference-and-noise ratio} (SINR) model is \cite{YWYYL15},
which attained a sub-linear speedup for local information exchange, but holds only for a restricted number of channels when each message can carry multiple packets. Thus, little has been known about the limits for leveraging multiple channels in an SINR context.

\mypara{Model} We assume synchronized operation with time measured in \emph{rounds}.
Nodes have no power control, no collision detection, but have a carrier sense mechanism in the form of standard signal strength measurements.
The SINR model of interference is assumed, but the parameters ($\alpha, \beta, N$) are allowed to vary within fixed ranges. We assume for simplicity of exposition that nodes are located in the plane, but the results extend to more general metric spaces known as fading metrics.\footnote{A metric space is said to be  \emph{fading} if the path loss exponent $\alpha$ is strictly greater than the doubling dimension of the metric. This is a generalization of the standard requirement of $\alpha>2$ in the two-dimensional Euclidean space, as the two-dimensional Euclidean space has a doubling dimension of $2$. For more details on fading metric, see \cite{H12}.} 
Nodes are given approximate values of SINR parameters and a polynomial bound on the number of nodes,
but have no knowledge of the location of other nodes or their distribution.

\mysubsec{Our Results}
Let $G=(V,E)$ be the communication graph obtained by connecting pairs of nodes that can potentially communicate with each other directly (please refer to Sec.~\ref{sec:model} for detailed definition). Let $D$ be the diameter of $G$, $\Delta$ be its maximum degree, $\mathcal{F}$ be the number of channels, and $n$ the number of nodes (see Sec.~\ref{sec:model} for definitions). We say that an event happens \emph{with high probability} (with respect to $n$), if it happens with probability $1-1/n^c$ for some constant $c > 0$.

We give a randomized algorithm that achieves data aggregation in $O(D+\Delta/\mathcal{F}+\log n\log\log n)$ time with high probability.
Since $\Delta$ is a lower bound for aggregation in single-channel networks, even ones with few hops,
we achieve linear speedup up to the additive $\log n\log\log n$ term. This is essentially best possible for a setting 
where high probability guarantees are required.

Our data aggregation algorithm is based on a data aggregation structure that can be constructed in $O(\log^2n)$ time. If a $\log^{O(1)}n$-approximation of $\Delta$ is known, the time for constructing the aggregation structure is $O(\Delta/\mathcal{F} + \log n \cdot \log\log n)$. Hence, in this case, the total time for accomplishing data aggregation (taking into account the time for structure construction) is $O(D+\Delta/\mathcal{F}+\log n\log\log n)$ with high probability. 

The aggregation structure is of independent interest, as it can be used to solve other core problems.
To illustrate its applicability, we give an algorithm for the node coloring problem that runs in $O(\Delta/\mathcal{F}+\log n\log\log n)$ time with high probability.


\mysubsec{Lower Bounds}
We indicate here briefly why our bounds are close to best possible.
Any global task involving communication requires at least $D$ steps, which yields a lower bound on every instance.
Similarly, $\lceil \log n\rceil$ is a lower bound for data aggregation, since at most half the items can be coalesced in a single round. 
Thus, independent of the parallelization in the form of multiple channels, $\Omega(D + \log n)$ steps are needed.

In a single channel, the term $\Delta$ is necessary for any communication task that involves all nodes when using fixed power assignment such as uniform power.
In particular, consider the ``exponential chain'', where point $i$ is located at position $2^i$ on the real line, $i=1, 2, \ldots, n$.
Then, when using uniform power, at most one successful transmission can occur in a time slot (assuming $\beta \ge 2^{1/\alpha}$) \cite{MW06}.
In particular, aggregation and coloring require $\Delta$ steps in single-channel networks,
and clearly $\mathcal{F}$ channels can reduce the time requirement at most to $\Delta/\mathcal{F}$.
While no proof is known, it is unlikely that power control reduces this bound in the distributed SINR setting; known distributed algorithms all feature time complexity of either the distance diversity (which can be as large as $n$) \cite{BHM13P,HM12P,pei2013distributed}
or terms linear in $\Delta$ \cite{HM11i,HM11,KV10,YHWL12}.





\mysubsec{Related Work}

\mypara{Data Aggregation} 
In single-channel networks, there is a long line of research on data aggregation under different settings in the protocol model~\cite{WHWWJ09,XLMTW11,YLL09} and the SINR model~\cite{ALHN12,BHM13P,DZWWX13,HM12P,HM12S,HWHYL12,LHWL10,LXWTDZQ09}.
Regarding distributed solutions in the SINR model, a
distributed aggregation algorithm with uniform power assignment was proposed in~\cite{LXWTDZQ09}, which achieves a
latency upper bound of $O(D+\Delta)$. Assuming a model where every node in the network knows its position, the network
diameter and the number of neighbors, Li et al.\ \cite{LHWL10} presented a distributed algorithm with a latency bound of
$O(K)$, where $K$ is the logarithm of the ratio between the length of the longest link and that of the shortest link. 
This result additionally needs that nodes can adjust the transmission power arbitrarily. In~\cite{HWHYL12}, Hobbs et al.\ gave a
deterministic algorithm which can accomplish data aggregation in $O(D+\Delta\log n)$ rounds.
An entirely different approach is to use (significant) precomputation to build a fast aggregation structure.
In particular, aggregation can be achieved in optimal $O(D + \log n)$ time \cite{BHM13P,HM12P}, but this uses
$O(K\log^2n)$ time for precomputation and also relies heavily on arbitrary power control.

In multi-channel networks, the multiple-message broadcast algorithm given in \cite{DGGKN13} can be adapted to solve the
data aggregation problem in a graph-based interference model in $O(D+\Delta+\frac{\log^2n}{\mathcal{F}}+\log n\log\log n)$ rounds with high probability, 
but it restricts the number of channels to at most $\log n$. 
An
algorithm for the related broadcast problem was given in ~\cite{DGKN11} for a similar setting but also allowing
disruptions on channels. The work closest to ours is a recent treatment of the local information exchange problem in
multi-channel SINR networks~\cite{YWYYL15}, where Yu et al.\ gave a distributed algorithm attaining a sub-linear
speedup. In the algorithm, the number of channels that can be used effectively is limited to $O(\sqrt{\Delta/\log n})$,
resulting in an $\Omega(\log n\cdot\sqrt{\Delta\log n})$ lower bound on the performance of the algorithm.

\mypara{Coloring} The distributed node coloring problem has been extensively studied since the 1980s as a classical symmetry breaking paradigm \cite{CV86}. Most work has been in message passing models that ignore interference and collisions.
Assuming a graph-based model that defines only direct interference from neighbors, Moscibroda and
Wattenhofer~\cite{MW08} gave an $O(\Delta\log n)$ time randomized algorithm using $O(\Delta)$ colors for
bounded-independence graphs, 
which was lated improved to a $\Delta+1$-coloring in $O(\Delta+\log\Delta\log n)$ time by Schneider and Wattenhofer \cite{SW09}.
Derbel and Talbi \cite{DT10} showed that the algorithm of \cite{MW08} can also
work in the SINR model with the same time and color bounds.
Yu et al.\ \cite{YWHL11A} gave a randomized algorithm with running time $O(\Delta\log n+\log^2n)$ that achieves a
$\Delta+1$-coloring in the SINR model.  All of the above results are for wireless networks with a single channel, and it appears no work has
previously addressed the coloring problem in multiple channel networks, let alone in the SINR model.

\mypara{Backbone Network Construction} Another line of related work is finding dominating sets and/or a
broadcast/ aggregation network in a multi-hop scenario. The work we directly use is that of~\cite{SRS08} with an
algorithm that finds a dominating set in the SINR model in $O(\log n)$ time. 
An algorithm was given in \cite{MW08} that finds a 
maximal independent set running in $O(\log^2n)$ time in the quasi unit disk model, later converted to the SINR
model in~\cite{YWHL11A}. 
Broadcast or aggregation networks among dominators are formed in some works such
as~\cite{BHM13P,HM12P,JKRS13,JK12,JKS13,YHWTL12,YHWYL13}. These works either highly rely on strong assumptions on the connectivity of the network~\cite{YHWTL12,YHWYL13}, use precise location information~\cite{JKRS13,JK12,JKS13}, or arbitrary power
adjustment~\cite{BHM13P,HM12P}. All these works are only for single-channel networks. 


\mysubsec{Roadmap} The formal model, problem definitions and preliminaries are given in Sec.\ \ref{sec:model}. Sec.~\ref{sec:overview} contains a technical overview. In Sec.~\ref{sec:mis}, an algorithm to find ruling sets is introduced, which 
is invoked frequently in the structure construction. The algorithm for constructing the aggregation structure is given in Sec.~\ref{sec:structure} and the data aggregation algorithm in Sec.~\ref{sec:aggreg}. Sec.~\ref{sec:apps} contains the coloring algorithm making use of the aggregation structure.

\section{Model, Problem Formulations and Preliminaries}\label{sec:model}

The network consists of a set $V$ of $n$ nodes with unique IDs that are positioned arbitrarily on a plane.
We focus on the setting of a \emph{uniform} power assignment, where all nodes use the same transmission power $P$.
For two nodes $u$ and $v$, denote by $d(u,v)$ the Euclidean distance between $u$ and $v$.

\mypara{Multiple Communication Channels and Synchronization} Nodes communicate through a shared medium divided
into $\mathcal{F}$ non-overlapping channels. 
Time is divided into synchronized rounds, where each round may contain a constant number of synchronized slots. All nodes start the algorithm at the same time.  In each slot of every round, each node can
select one of the $\mathcal{F}$ channels and either transmit or listen on that channel. A node that
operates on a channel in a given slot learns nothing about events on other channels.

\mypara{Interference and SINR model} Simultaneous transmissions on the same channel interfere with each other. The SINR
model captures the interference by stipulating that a message sent by node $u$ to node $v$ can be correctly received at $v$
iff $(i)$ $u$ and $v$ operate on the same channel and $v$ does not transmit, and $(ii)$ the following signal-to-interference-and-noise-ratio (SINR)
is above a hardware-defined threshold $\beta\ge 1$:
\begin{equation}\label{eq:sinrcondition}
SINR(u,v):=\frac{P/d(u,v)^\alpha}{N+\sum_{w\in S\setminus\{u\}}\frac{P}{d(w,v)^\alpha}}\geq\beta\ ,
\end{equation}
\noindent
where $\alpha > 2$ is the ``path-loss'' constant, $N$ is the ambient noise, and
$S$ is the set of nodes transmitting simultaneously with $u$.

The \emph{transmission range} $R_T$ is the maximum distance at which a transmission can be successfully
decoded (in the absence of other transmissions); by the SINR
condition~(\ref{eq:sinrcondition}), $R_T=(\frac{P}{\beta\cdot N})^{1/\alpha}$.

%

We assume that listening nodes can measure the SINR (only in the case of a successful reception), and the total received power. Nodes can also use this feature to infer (approximate) distances from the sender of a received message. This power reception feature is comparable to the RSSI function of actual wireless motes~\cite{SRS08}. In our algorithm, indeed, it is enough to determine whether the SINR (of a successful reception) or the total received power crosses a single fixed threshold. 

It is always of theoretical interest to determine the tradeoffs between different model assumptions, and to identify the least set of primitives that suffice for efficient execution.
We posit, however, that the default model for wireless algorithms in physical models should feature receiver-side carrier sense ability. Given that such a feature is so standard in even the cheapest hardware and so easily implementable, it would be counterproductive to exclude it. Note that we assume no transmitter-side detection ability.



\mypara{Communication Graph and Notations} 
For parameter $c$, $0 < c < 1$, denote $R_c:=(1-c)R_T$.
 The \emph{communication graph} $G(V, E)$ of a given network consists of
all network nodes and edges $(v, u)$ such that $d(v, u)\leq R_{\epsilon}$, where $0 < \epsilon< 1$ is a fixed model
parameter. Since nodes of distance very close to $R_T$ can only communicate in the absence of other activity in the
network arbitrarily far away, we adopt the standard assumption that a slightly smaller range, $R_\epsilon$, 
is sufficient to communicate \cite{BHM13P,DGKN13,JKRS14}. 

We use standard graph terminology: $N(u)$ is the set of neighbors of node $u$; $d_u = |N(u)|$ is the degree of $u$; and
$\Delta$ is the maximum degree of a node.
The \emph{diameter} $D$ of a graph $G$ is the maximum, over all pairs of nodes $u,v$,
of the shortest hop-distance between $u$ and $v$.

An \emph{$r$-ball} is a disk in the plane of radius $r$.  Denote by $E_v^r$ the $r$-ball centered at node $v$, and overload the
notation to refer also to the set of nodes in the ball.  A node $u$ is an \emph{$r$-neighbor} of (not necessarily
distinct) node $v$ if $d(u,v)\le r$.  An \emph{$r$-dominating set} is a subset $S$ of nodes (called \emph{dominators}) such
that each node in $V$ has an $r$-neighbor in $S$. The \emph{density} of an $r$-dominating set is the maximum
number of dominators in an $r$-ball (over all balls in the plane). A set $S$ of nodes is \emph{$r$-independent}
if no two nodes in $S$ are $r$-neighbors.  An $r$-independent set $S$ is \emph{maximal} if it is also $r$-dominating.

\mypara{Knowledge of Nodes} Nodes know a polynomial approximation to $n$ (i.e., the value of $\log n$, up to constant factors). 
For simplicity of description, we also use $n$ to denote this estimate. We assume that nodes do not know the precise value of the SINR parameters $\alpha$, $\beta$ and $N$ but instead know only upper and lower bounds for the parameters (i.e., $\alpha_{min}$ and $\alpha_{max}$, $\beta_{min}$ and $\beta_{max}$, $N_{min}$ and $N_{max}$).
For simplicity, we perform calculations assuming that exact values of these parameters are known; to deal with uncertainty regarding those parameters, it suffices to choose their maximal/minimal values depending on whether upper or lower estimates are needed. Nodes have no other information, such as the network topology, their neighbors or their location coordinates.

\mypara{Data Aggregation} Initially, each node has a data value.
 The \emph{data aggregation} problem is to compute an aggregate function (e.g., maximum or average) on the input data from all nodes in the network, and inform all nodes of this value as quickly as possible.


\mypara{Preliminaries} The following Chernoff bounds will be used in the analyses of algorithms. The proofs of these bounds can be found in most 
textbooks on probability theory or randomized algorithms.
\begin{lemma}[Chernoff bounds]\label{lem:chernoff}
Let $X_1, X_2, \ldots, X_n$ be independent Bernoulli random variables.
Let $X := \sum^n_{i=1} X_i$ and $\mu := \mathbb{E}[X]$. 
Then, for any $\delta > 0$, it holds that 
\[ Pr[X \geq (1 + \delta)\mu] \leq \left(\frac{e^\delta}{(1 + \delta)^{1 + \delta}}\right)^{\mu}. \]
More precisely, 
\begin{equation}\label{eq:Chernoff1}
Pr[X \geq 2 \mu] \le (e/4)^\mu \le e^{-\mu/3}.
\end{equation}
On the other hand, 
\begin{equation}\label{eq:Chernoff3}
Pr[X \leq \frac{1}{2} \mu] \leq \left(\frac{e^{-1/2}}{(1/2)^{1/2}}\right)^{\mu} = (e/2)^{-\mu/2} \le e^{-\mu/8}\ .
\end{equation}
\end{lemma}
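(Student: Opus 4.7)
The plan is to follow the standard Chernoff--Cram\'er moment generating function (MGF) argument. For the upper tail, I would start from Markov's inequality applied to $e^{tX}$ with $t>0$:
\begin{equation*}
\Pr[X \geq (1+\delta)\mu] \;=\; \Pr[e^{tX} \geq e^{t(1+\delta)\mu}] \;\leq\; \frac{\mathbb{E}[e^{tX}]}{e^{t(1+\delta)\mu}}.
\end{equation*}
By independence, $\mathbb{E}[e^{tX}] = \prod_{i} \mathbb{E}[e^{tX_i}]$. Since each $X_i$ is Bernoulli with some parameter $p_i$, the identity $\mathbb{E}[e^{tX_i}] = 1 + p_i(e^t-1)$ together with $1+x \leq e^x$ yields $\mathbb{E}[e^{tX_i}] \leq \exp(p_i(e^t-1))$. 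Multiplying and using $\mu = \sum_i p_i$ gives $\mathbb{E}[e^{tX}] \leq \exp(\mu(e^t-1))$.

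The next step is to optimize the resulting bound $\Pr[X \geq (1+\delta)\mu] \leq \exp\bigl(\mu(e^t - 1 - t(1+\delta))\bigr)$ over $t>0$. Differentiation shows the minimizer is $t = \ln(1+\delta)$, which is positive since $\delta>0$, and substituting yields the main inequality $\left(\tfrac{e^\delta}{(1+\delta)^{1+\delta}}\right)^\mu$. For the specialization (\ref{eq:Chernoff1}), plugging in $\delta = 1$ gives $(e/4)^\mu$; the secondary bound $(e/4)^\mu \leq e^{-\mu/3}$ then reduces to the numerical check $\ln 4 \geq 4/3$, which is immediate.

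For the lower tail (\ref{eq:Chernoff3}), I would mirror the argument: apply Markov's inequality to $e^{-tX}$ for $t>0$, derive $\mathbb{E}[e^{-tX_i}] \leq \exp(p_i(e^{-t}-1))$ analogously, and obtain $\Pr[X \leq (1-\delta)\mu] \leq \exp\bigl(\mu(e^{-t}-1+t(1-\delta))\bigr)$. Optimizing with $t = -\ln(1-\delta) > 0$ (valid since $0<\delta<1$) produces $\left(\tfrac{e^{-\delta}}{(1-\delta)^{1-\delta}}\right)^\mu$. Setting $\delta = 1/2$ gives $(2/e)^{\mu/2}$, and the further simplification $(2/e)^{\mu/2} \leq e^{-\mu/8}$ reduces to $\ln 2 \leq 3/4$.

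I do not anticipate any genuine obstacle: the entire argument is classical and the calculations elementary. The only bookkeeping required is to ensure that the optimizing choices of $t$ are positive in each case (so that Markov's inequality applies in the intended direction), and to verify the two numerical inequalities $\ln 4 \geq 4/3$ and $\ln 2 \leq 3/4$ used in the last simplifications.
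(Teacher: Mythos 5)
Your proposal is correct and is exactly the standard moment-generating-function argument that the paper itself defers to ("the proofs of these bounds can be found in most textbooks"); all the optimizing choices of $t$ and the two numerical checks $\ln 4 \ge 4/3$ and $\ln 2 \le 3/4$ work out as you state. Nothing further is needed.
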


We use a frequently-used argument that shows that well-separated communication can proceed independently.
The proof of this lemma uses the standard technique of bounding interference within concentric circles.

\begin{lemma}\label{le:detertra}
Let $r_1, r_2$ be distance parameters such that $r_2\leq \min\{\left(\frac{\alpha-2}{48\beta(\alpha-1)}\right)^{\frac{1}{\alpha}}\cdot r_1, R_T/2\}$. 
Suppose the set $S_F$ of nodes transmitting on a channel $F$ is $r_1$-independent.
Then, the transmission of each node $v \in S_F$ is received by all $r_2$-neighbors of $v$ that are listening on $F$.
\end{lemma}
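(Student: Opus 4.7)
The plan is to verify the SINR condition at the receiver directly. Fix a node $v \in S_F$ and a listener $u$ with $d(u,v) \le r_2$ on channel $F$. The received signal power is $P/d(u,v)^\alpha \ge P/r_2^\alpha$, and the assumption $r_2 \le R_T/2$ together with $R_T = (P/(\beta N))^{1/\alpha}$ gives $P/r_2^\alpha \ge 2^\alpha \beta N$, so the noise term $N$ absorbs at most half of the SINR budget (say, $\beta N \le \tfrac{1}{2}\cdot P/r_2^\alpha$). It then suffices to show that the total interference $I := \sum_{w \in S_F \setminus \{v\}} P/d(w,u)^\alpha$ satisfies $\beta I \le \tfrac{1}{2}\cdot P/r_2^\alpha$.

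To bound $I$, I would partition the interferers by distance from $u$ into concentric annuli of width $r_1/2$ (or some similar fraction of $r_1$). Because $S_F$ is $r_1$-independent and $d(u,v) \le r_2 \le r_1/2$, the triangle inequality forces every $w \in S_F \setminus \{v\}$ to satisfy $d(w,u) \ge d(w,v) - d(v,u) > r_1 - r_2 \ge r_1/2$, so all interferers live in annuli at distance $\Omega(i \cdot r_1)$ from $u$ for $i \ge 1$. A standard packing argument, applied to the $r_1$-independent set $S_F$, bounds the number of interferers inside the $i$-th annulus by $O(i)$ (by comparing the area of the annulus to the area of disjoint $r_1/2$-disks centered at the interferers).

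Summing the contribution of each annulus gives
\begin{equation*}
I \;\le\; \sum_{i \ge 1} O(i) \cdot \frac{P}{(i \cdot r_1/2)^\alpha} \;=\; \frac{O(1) \cdot P}{r_1^\alpha} \sum_{i \ge 1} i^{1-\alpha}.
\end{equation*}
The tail sum $\sum_{i \ge 1} i^{1-\alpha}$ converges precisely because $\alpha > 2$, and a crude estimate against $\int_1^\infty x^{1-\alpha}\,dx = 1/(\alpha-2)$ yields $I \le C \cdot \frac{\alpha-1}{\alpha-2} \cdot \frac{P}{r_1^\alpha}$ for an absolute constant $C$ arising from the packing estimate and the annulus widths. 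Combining this with the requirement $\beta I \le \tfrac{1}{2}\cdot P/r_2^\alpha$ translates into $r_2^\alpha \le \frac{\alpha-2}{2C\beta(\alpha-1)} \cdot r_1^\alpha$, which, after choosing the packing/annulus constants so that $2C = 48$, is exactly the hypothesis $r_2 \le \bigl(\tfrac{\alpha-2}{48\beta(\alpha-1)}\bigr)^{1/\alpha} r_1$. Adding the noise bound from the first paragraph then yields $\mathrm{SINR}(v,u) \ge \beta$, completing the proof.

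The only genuinely delicate step is getting the constants to line up so that the exponent denominator is exactly $48\beta(\alpha-1)$: this requires being careful about (i) the precise annulus width used, (ii) the packing bound — including the boundary effect that makes the leading factor $(\alpha-1)$ rather than just a pure constant when one integrates $\sum i^{1-\alpha}$ against $1/(\alpha-2)$, and (iii) the split between noise and interference. The rest of the argument is a routine, textbook-style concentric-circle bounding that is standard in SINR analyses such as those cited in the related-work discussion.
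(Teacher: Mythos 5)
Your proposal is correct and takes essentially the same route as the paper's proof: a concentric-annulus packing argument on the $r_1$-independent interferers giving $O(t)$ nodes per annulus, the convergent series $\sum_{t\ge 1} t^{1-\alpha}\le \frac{\alpha-1}{\alpha-2}$ (which is where the hypothesis $r_2 \le \bigl(\frac{\alpha-2}{48\beta(\alpha-1)}\bigr)^{1/\alpha} r_1$ is spent), and the condition $r_2\le R_T/2$ to absorb the ambient noise. The only differences are cosmetic --- you center the annuli at the receiver and split the SINR budget explicitly between noise and interference, while the paper centers them at the transmitter $w$ and folds the noise in at the end via $I_x\le(R_T^\alpha/r_2^\alpha-1)N$; your intermediate claim $r_2\le r_1/2$ is not literally forced by the hypothesis when $\alpha$ is large, but the paper's own proof glosses over the identical receiver-offset issue by bounding $d(y,x)\ge t r_1$.
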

\begin{proof}
By assumption, the set $S_F$ satisfies $d(u,v)> r_1$, for any pair of nodes $u,v\in S_F$. 
For a node $w\in S_F$, we compute the interference experienced by a node $x\in Q_F\cap E_w^{r_2}$,
where $Q_F$ is the set of nodes selecting to operate on a channel $F$.
Let $C_t$ be the annulus with
  distance from $w$ in the range $[tr_1,(t+1)r_1)$ for $t\geq 1$. Without confusion, $C_t$ is also used to denote the
  set of nodes in $C_t$ that operate on $F$. Because any two transmitting nodes are separated by $r_1$, an area argument
  implies that $|C_t|\leq 8(2t+1)$. Then we bound the interference at a node $x\in E_w^{r_2}$ caused by other
  transmitters in $S_F$ as follows.
\begin{equation*}
\begin{aligned}
 I_x=\sum_{y\in S_F\setminus\{w\}}\frac{P}{d_{yx}^\alpha}
&\leq\sum_{t=1}^\infty\frac{N\beta R_T^{\alpha}}{(tr_1)^\alpha}\cdot 8(2t+1)\\
&\leq24r_1^{-\alpha}N\beta R_T^{\alpha}\sum_{t=1}^\infty t^{-\alpha+1}\\
&\leq 24r_1^{-\alpha}N\beta R_T^{\alpha}\cdot \frac{\alpha-1}{\alpha-2}\\
&\leq (\frac{R_T^{\alpha}}{r_2^\alpha}-1)N.
\end{aligned}
\end{equation*}
Then by the SINR condition, $x$ can receive the message sent by $u$. 
\end{proof}

Given that each node $u$ transmits with a probability $p_u$, let $P_r(v)=\sum_{u\in E_v^r\cap Q_F}p_u$
be the sum of transmission probabilities of nodes in $E_v^r$ that operate on channel $F$. Using a similar argument as in proving Lemma~\ref{le:detertra} and further considering the transmission probabilities of nodes, 
we can get the following result.
\begin{lemma}\label{le:protran}
Let $R\in \Omega(R_T)$ be a distance, $F$ be a channel and $Q_F$ the set of nodes operating on the channel.
Suppose that each node $u$ transmits on $F$ with probability $p_u$, satisfying
$P_R(v) := \sum_{u\in E_v^R\cap Q_F}p_u \le \psi$.
Then, whenever a node $v$ transmits on $F$, with constant probability $\kappa:=e^{-O((R_T/R)^2\cdot \psi)}\in\Omega(1)$, 
it is heard by all its $R$-neighbors $E_v^R \cap Q_F$ on the channel.
\end{lemma}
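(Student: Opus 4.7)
The plan is to decompose the success event into two independent parts and show each occurs with sufficient probability. Define event $A$ as ``no node in $E_v^{R_T}\cap Q_F$ other than $v$ transmits on $F$'' and event $B$ as ``the interference at every $u\in E_v^R$ from nodes outside $E_v^{R_T}$ does not exceed $\tau:=\tfrac{1}{2}N((R_T/R)^\alpha-1)$'', the SINR slack derived from $d(v,u)\le R$. If $A\cap B$ holds and $v$ transmits, then every $u\in E_v^R\cap Q_F$ is in listening mode (by $A$) and receives $v$'s signal with SINR above $\beta$ (by $B$), so reception is successful. Because $A$ and $B$ depend on disjoint sets of independent Bernoulli trials (inside vs.\ outside $E_v^{R_T}$), $\Pr[A\cap B\mid v\text{ transmits}]=\Pr[A]\cdot\Pr[B]$.

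For $\Pr[A]$, the ball $E_v^{R_T}$ is covered by $O((R_T/R)^2)$ balls of radius $R$ in the plane, so the total transmission mass inside it, $s:=\sum_{u\in E_v^{R_T}\cap Q_F}p_u$, is at most $O((R_T/R)^2\psi)$ by the hypothesis $P_R(\cdot)\le\psi$. Using $1-p\ge e^{-2p}$ for $p\le 1/2$ (and handling any $p_u>1/2$ by a separate accounting), $\Pr[A]=\prod(1-p_u)\ge e^{-2s}\ge e^{-O((R_T/R)^2\psi)}$, which is exactly the exponential factor in $\kappa$.

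For $\Pr[B]$, I would reuse the concentric-annulus technique of Lemma~\ref{le:detertra} but applied to \emph{expected} interference. Partition the nodes outside $E_v^{R_T}$ into annuli $C_t=E_v^{(t+1)R_T}\setminus E_v^{tR_T}$ for $t\ge 1$. Annulus $C_t$ is covered by $O(t(R_T/R)^2)$ $R$-balls, so the sum of transmission probabilities inside $C_t$ is $O(t(R_T/R)^2\psi)$. Every $w\in C_t$ is at distance $\ge tR_T-R\ge tR_T/2$ from any $u\in E_v^R$, so its contribution to $\mathbb{E}[I_u]$ is at most $p_w\cdot P/(tR_T/2)^\alpha$. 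Summing inside $C_t$ and then over $t\ge 1$ (the series converges since $\alpha>2$) yields $\mathbb{E}[I_u]\le c_0(R_T/R)^2\psi\cdot\beta N$ uniformly for every $u\in E_v^R$. Since $R\in\Omega(R_T)$ and $\psi=O(1)$ (which is forced by the requirement $\kappa\in\Omega(1)$), this expected interference is smaller than $\tau$ after a suitable choice of constants, and Markov's inequality gives $\Pr[B]\ge 1/2$.

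A small additional observation delivers $B$ for all $u\in E_v^R$ simultaneously with a single Markov argument: for every $w$ outside $E_v^{R_T}$ and any $u\in E_v^R$, $d(w,u)\in[\tfrac{1}{2}d(w,v),\tfrac{3}{2}d(w,v)]$, so the interferences at any two points of $E_v^R$ agree up to a multiplicative constant, and bounding the supremum via Markov suffices. Putting the pieces together, $\kappa\ge\Pr[A]\Pr[B]\ge\tfrac{1}{2}e^{-O((R_T/R)^2\psi)}=e^{-O((R_T/R)^2\psi)}$. The main delicate point is choosing the constants so that $c_0(R_T/R)^2\psi\beta N\le\tau$; this is where the tacit requirement $\psi=O(1)$ (together with $R=\Theta(R_T)$) is genuinely needed, and it is the reason the lemma only asserts a constant success probability rather than a high-probability guarantee.
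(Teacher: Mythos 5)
Your proof is correct and follows essentially the same route the paper takes: the paper itself only sketches this lemma (deferring to prior work), and its sketch is exactly your decomposition into (a) no other transmitter nearby, with probability $e^{-O((R_T/R)^2\psi)}$ via covering by $R$-balls, and (b) far interference controlled by bounding its expectation over concentric annuli and applying Markov's inequality. Your write-up is in fact more explicit than the paper's, including the honest caveat that $\psi$ must be a suitably small constant for the Markov step to leave positive SINR slack, which the paper leaves implicit.
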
 

Lemma \ref{le:protran} has been implicitly proved in previous work, such as [11] (Lemma 4.1 and Lemma 4.2). The basic idea of proving Lemma 3 is bounding the interference at the neighbors of a transmitter $v$ from other nearby transmitters (within a specified distance that is a constant times $R$) and faraway transmitters respectively. For the interference from nearby transmitters, based on the facts that these transmitters can be covered by a constant number of $R$-balls and the sum of transmission probabilities of transmitters in each $R$-ball is upper bounded by a constant (as given in the condition), it is easy to show that there are no nearby transmitters with a certain constant probability. For the interference from faraway nodes, it suffices to compute the expected interference at the neighbors of a transmitter $v$ based on the transmission probabilities of nodes, instead of computing the real interference in Eq.~(6). Because the sum of transmission probabilities of each node's neighbors is bounded by a constant (as given in the condition), which means that there are a constant number of transmitters in expectation within the neighborhood of each node, the expected interference at the neighbors of the transmitter $v$ can be bounded by a small constant using the same concentric argument as in Eqt. (6). Then by Markov Inequality, it can shown that with constant probability (determined by $R_T /R$ and $\psi$), the interference at every neighbor of the transmitter $v$ is still upper bounded by a small constant, which is enough to ensure successful receptions. Combining the results for interference bounding from nearby nodes and faraway nodes, Lemma \ref{le:protran} can be proved. For more details, please refer to [11].

In this work, we use Lemma~\ref{le:protran} for only two different distances. So there is a constant lower bound for the probability of successful transmissions. In the subsequence, we still use $\kappa$ to denote this lower bound.

%

\section{Technical Overview}
\label{sec:overview}

Our approach is to build a multi-purpose dissemination structure that we then use in each of our problems.  The structure has
global and local parts, which are linked through the \emph{dominators}, 
the local leaders that collaborate to carry out the global task.

After finding a low-density set of dominators, the other nodes are partitioned into local clusters, each headed by a dominator. These clusters are then colored to disperse the clusters of same color, effectively eliminating interference from other clusters. The clusters are arranged into a communication tree to carry out the global task. These constructions are by now all fairly well known, so we build on previous work, in particular using the $O(\log n)$-round clustering process from \cite{SRS08}.

Our main contributions are in the treatment of the intra-cluster aspects.
We first estimate the size of each cluster, in order to adjust the contention.
We distribute the cluster nodes randomly into channels, and run leader election processes to elect a \emph{reporter} in each channel in $O(\log n)$ rounds. 
We then form a binary tree of $O(\log \mathcal{F})$ levels on the reporters, which is used to aggregate the data to the dominator. 
The total time needed for reporter election and reporter tree construction is $O(\log^2n)$, while the aggregation cost in the clusters is  $O(\Delta/\mathcal{F}+\log n\log\log n)$. If a $\log^{O(1)}n$-approximation of $\Delta$ is known, the reporter election and the reporter tree construction can be done in $O(\Delta/\mathcal{F}+\log n\log\log n)$ time as well.

\section{Ruling Set Algorithm}\label{sec:mis}


We present an algorithm that will be invoked frequently in subsequent sections.
A $(r,s)$-\emph{ruling set} is a subset $S$ of nodes that is both $r$-independent and $s$-dominating.
The algorithm presented finds a $(r,2r)$-ruling set, where $r$ satisfies $r \le R_T/2 = \frac{1}{2}(P/(N\beta))^{1/\alpha}$.

The algorithm has two phases. In the first phase, a constant density $r$-dominating set $X$ is found in $O(\log n)$ rounds using the algorithm of Scheideler et al.~\cite{SRS08}. Let $\mu$ denote an upper bound on the density guaranteed by their algorithm. 
In the remainder of this section, we focus on the second phase, computing a maximal $r$-independent set $S$ among the dominators. 
Namely, $S$ is $r$-independent and each node in $X$ is within distance $r$ from a node in $S$. Then, by the triangular inequality, $S$ forms a $2r$-dominating set of the full set $V$ of nodes.

The strength of signals and interference can yield precious indications about the origin of the signal, and even of interferers.

\begin{definition}
A \emph{clear reception} occurs at a node, for a parameter $r$, if:
a) the message originates from an $r$-neighbor of the node, and
b) the interference sensed is at most 
$T_s=N \cdot \min\{\frac{2^\alpha-1}{2^{\alpha}}, (\frac{1}{2})^{\alpha}\cdot \beta\}$. 
The latter condition ensures that no other $4r$-neighbor transmitted.
\end{definition}
Based on our model assumptions, a node can detect clear receptions. 

The second phase of the algorithm uses three kinds of messages: \textsc{Hello}, \textsc{Ack}, and \textsc{In}.
Let $\gamma=3/(\kappa/2\mu)^2 = 12\mu^2/\kappa^2$, where $\kappa$ is the constant of Lemma~\ref{le:protran}.
The phase consists of $\gamma\ln n$ rounds, each consisting of three slots:
\begin{itemize}
\item\textbf{Slot 1.} Each node transmits \textsc{Hello} independently with probability $1/(2\mu)$.
\item\textbf{Slot 2.} If a node gets a clear reception of \textsc{Hello}, it
sends \textsc{Ack} independently with probability $1/(2\mu)$.
\item\textbf{Slot 3.} If a node sent \textsc{Hello} and received \textsc{Ack} from an $r$-neighbor,
it then joins the set $S$, transmits \textsc{In} and halts.
Otherwise, the node listens; if it receives \textsc{In} from an $r$-neighbor, it halts.
\end{itemize}
If a node is still active after all $\gamma\ln n$ rounds, it then enters the set $S$.
This completes the specification of the second phase, and thus the algorithm.
\medskip

We first argue the correctness of the last step, when dominated nodes bow out.

\begin{lemma}\label{le:MISinform}
If a node joins $S$ in a round,
then all of its (still active) $r$-neighbors halt after that round.
\end{lemma}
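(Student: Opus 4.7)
The plan is to pin down the structure of \textsc{In}-transmitters in Slot~3 using the clear reception that occurred at $v$'s \textsc{Ack}-sender, and then invoke a direct SINR bound to show $v$'s \textsc{In} reaches every active $r$-neighbor of $v$.

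Let $w$ be the $r$-neighbor of $v$ from which $v$ received \textsc{Ack} in Slot~2, and let $u$ be an arbitrary $r$-neighbor of $v$ that is still active at the start of the round. Since $w$ transmitted \textsc{Ack}, $w$ obtained a clear reception of \textsc{Hello} in Slot~1. By the definition of $T_s$---specifically, the $(1/2)^\alpha\beta$ branch together with $r\le R_T/2$---any $4r$-neighbor of $w$ other than the source $w$ decoded would, upon transmitting \textsc{Hello}, contribute interference exceeding $T_s$ at $w$. Hence at most one $4r$-neighbor of $w$ transmitted \textsc{Hello}, and since $v$ transmitted \textsc{Hello} and $d(v,w)\le r$, that transmitter is $v$.

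Two consequences follow. First, $d(u,w)\le d(u,v)+d(v,w)\le 2r$, so $u$ is a $4r$-neighbor of $w$; hence $u\ne v$ implies that $u$ did not transmit \textsc{Hello}, and in particular $u$ is listening in Slot~3. Second, any other \textsc{In}-transmitter $v'$ in Slot~3 transmitted \textsc{Hello}, so $v'$ is not a $4r$-neighbor of $w$, giving $d(v',w)>4r$, $d(v',v)>3r$, and therefore $d(v',u)>2r$. Running the same argument at $v'$'s own \textsc{Ack}-sender shows that all Slot~3 transmitters are pairwise more than $3r$ apart.

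It remains to show that $u$ decodes $v$'s \textsc{In}. The signal at $u$ is at least $P/r^\alpha$ since $d(u,v)\le r$. For the interference I would use the concentric-annuli packing argument from the proof of Lemma~\ref{le:detertra}: slicing into rings of width $r$ around $u$, the $j$-th ring meets $O(j)$ of the $3r$-separated transmitters, each at distance $\Omega(jr)$, contributing $O(P/r^\alpha)\cdot j^{1-\alpha}$; summing over $j\ge 2$ bounds the total interference by $C(\alpha)\cdot P/r^\alpha$ for a finite constant $C(\alpha)$. Combined with the factor $2^\alpha$ of slack from $r\le R_T/2$, namely $P/r^\alpha\ge 2^\alpha\beta N$, this forces the SINR at $u$ above $\beta$, so $u$ receives \textsc{In} from $v$ and halts. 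The main obstacle is exactly this last SINR calculation: the clear-reception argument yields only a separation ratio of~$2$ between the signal distance and the interferer distance, much looser than the ratio $\bigl(\tfrac{\alpha-2}{48\beta(\alpha-1)}\bigr)^{1/\alpha}$ required to apply Lemma~\ref{le:detertra} as a black box, so the computation must be done by hand and must rely on the $2^\alpha$ gap in the signal-to-noise term to absorb the packing constant $C(\alpha)$.
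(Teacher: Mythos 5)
Your setup is on the right track---you correctly identify the \textsc{Ack}-sender's clear reception as the source of structure, and your geometric deductions (every other \textsc{In}-transmitter is more than $4r$ from the \textsc{Ack}-sender, hence more than $2r$ from the listener $u$, and the \textsc{In}-transmitters are pairwise more than $3r$ apart) are all valid. But the final SINR calculation, which you yourself flag as the main obstacle, does not close, and this is a genuine gap. A packing argument over $3r$-separated interferers whose nearest representative can sit at distance just over $2r$ yields an interference bound of the form $C(\alpha)\cdot P/r^\alpha$, and $C(\alpha)$ is not small: already the first annulus can hold several transmitters at distance between $2r$ and $3r$, each contributing $3^{-\alpha}P/r^\alpha$. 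The SINR condition requires the interference to be at most $\frac{1}{\beta}P/r^\alpha - N$, and the ``$2^\alpha$ gap'' from $r\le R_T/2$ lives entirely in the noise term: it shows $N\le 2^{-\alpha}\cdot\frac{1}{\beta}P/r^\alpha$, so the budget improves only from $\frac{1}{\beta}P/r^\alpha-N$ to roughly $\frac{1-2^{-\alpha}}{\beta}P/r^\alpha$. It cannot absorb a packing constant $C(\alpha)>1/\beta$, which is what the geometry actually produces (e.g.\ for $\alpha$ close to $2$ and $\beta\ge 2$ the first ring alone already exceeds the budget). So the lemma does not follow from separation alone.

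The paper's proof uses a different and essentially indispensable ingredient that your argument discards: the clear-reception condition gives a \emph{quantitative} bound on the aggregate interference actually sensed at the \textsc{Ack}-sender $w$, namely $I(w)\le T_s\le\frac{2^\alpha-1}{2^\alpha}N$, summed over \emph{all} other \textsc{Hello}-transmitters---a set containing all other \textsc{In}-transmitters. Since every such interferer $y$ satisfies $d(y,w)\ge 4r$ while $d(w,u)\le 2r$, one gets $d(y,u)\ge\frac{1}{2}d(y,w)$ termwise, so the interference at the listener is at most $2^\alpha T_s\le(2^\alpha-1)N\le\frac{1}{\beta}P/r^\alpha-N$. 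In other words, the measured threshold $T_s$ certifies that the true aggregate interference is a small fraction of $N$ (not of $P/r^\alpha$), and the $4r$-versus-$2r$ geometry transfers that certificate to $u$ at a cost of only $2^\alpha$. To repair your proof you would need to retain this quantitative use of condition (b) of clear reception rather than reducing it to the qualitative statement that no $4r$-neighbor transmitted.
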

\begin{proof}
Let $Y$ be the set of nodes that joined $S$ during the given round, and let $u$ be a node in $Y$.
We claim that all nodes in $E_u^r$ receive \textsc{In} message from $u$.
Let $w$ be a node in $E_u^r$ and observe that the strength of the signal from $u$ received on $w$ is at least $P/r^\alpha$.
Thus, it suffices to show the total interference $I_{Y_u}(w)$ from $Y_u = Y \setminus u$ received 
by $w$ is at most $\frac{1}{\beta}P/r^\alpha - N$.

Let $v$ be the $r$-neighbor of $u$ that sent it \textsc{Ack} and $y$ be a node in $Y_u$.
Since $v$ had a clear reception, $d(v,y) \ge 4r$, while $d(v,w) \le 2r$, since they are both $r$-neighbors of $u$.
Thus, $d(y,w) \ge d(y,v) - d(v,w) \ge \frac{1}{2} d(y,v)$.
Also, the interference $I_{Y_u}(v)$ received by $v$ is then at most $T_s \le \frac{2^\alpha-1}{2^\alpha} N$.
Hence, 
  \[ I_{Y_u}(w) = \sum_{y \in Y_u} \frac{P}{d(y,w)^\alpha} \le 2^\alpha I_{Y_u}(v) \le 2^\alpha T_s \le (2^\alpha-1)N  \le \frac{1}{
\beta} P/r^\alpha - N \ , \]
as desired.
\end{proof}

The main correctness issue is to ensure independence. While the above lemma handles nodes that enter the ruling set
during the main rounds, we use a probabilistic argument to argue that neighbors are unlikely to survive all the rounds
to be able to enter the set $S$ at the end of the execution.

\begin{lemma}\label{le:MISresult}
The algorithm correctly computes a $(r,2r)$-ruling set $S$ in $O(\log n)$ rounds, with high probability.
\end{lemma}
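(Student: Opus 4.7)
My plan is to establish three things separately: (i) the $O(\log n)$ round bound, (ii) the $2r$-dominating property of $S$, and (iii) the $r$-independence of $S$ with high probability, with the probabilistic argument concentrated in (iii).

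For (i), Phase~1 takes $O(\log n)$ rounds by \cite{SRS08}, and Phase~2 consists of $\gamma\ln n = O(\log n)$ rounds of three slots each. For (ii), Phase~1 already guarantees that every node in $V$ lies within distance $r$ of some dominator in $X$. During Phase~2 a dominator $x \in X$ halts only by joining $S$ itself or by receiving an \textsc{In} from an $r$-neighbor that has just joined $S$ (and any dominator still active after all $\gamma\ln n$ rounds joins $S$ at the final step). Thus every $x \in X$ lies within distance $r$ of some member of $S$, and the triangle inequality yields $2r$-domination of all of $V$.

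For (iii), Lemma~\ref{le:MISinform} already rules out the dangerous case for nodes that enter $S$ during the main rounds: whenever a node joins $S$, all of its still-active $r$-neighbors halt and cannot enter $S$. The only way two $r$-neighbors $u, v \in X$ can both end up in $S$ is therefore for neither of them, nor any $r$-neighbor of either, to join $S$ during the $\gamma\ln n$ main rounds, so that both survive and enter $S$ in the final step. I bound the probability of this bad event. Fix such a pair $u, v$. I claim that in any round in which both are active, at least one of them halts with some constant probability $p_0 > 0$. Isolate the event that $u$ transmits \textsc{Hello} in slot~1 (probability $1/(2\mu)$), $v$ listens and gets a clear reception of $u$'s message, $v$ transmits \textsc{Ack} in slot~2 (probability $1/(2\mu)$), and $u$ receives that \textsc{Ack}; then in slot~3, $u$ joins $S$ and $v$ halts by Lemma~\ref{le:MISinform}. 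Since $X$ has density at most $\mu$ in any $r$-ball and each node transmits with probability $1/(2\mu)$, the sum of transmission probabilities in every $R_T$-ball is $O(1)$, so Lemma~\ref{le:protran} applied in slots~1 and~2 provides a constant lower bound $\kappa$ on the probability that $u$'s (resp.\ $v$'s) transmission is received by all $r$-neighbors with the stronger interference bound needed to be a clear reception. Multiplying the contributions gives $p_0 \ge (\kappa/(2\mu))^2$.

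With $\gamma = 3/p_0 = 12\mu^2/\kappa^2$, the probability that both $u$ and $v$ survive all $\gamma\ln n$ rounds is at most $(1-p_0)^{\gamma\ln n} \le e^{-3\ln n} = n^{-3}$. A union bound over the at most $n^2$ pairs of $r$-neighbors in $X$, together with the high-probability success of Phase~1, proves the lemma; the constant in the high-probability guarantee can be boosted arbitrarily by enlarging $\gamma$. The main obstacle I anticipate is the careful application of Lemma~\ref{le:protran} to deliver \emph{clear} receptions rather than merely decodable ones: clear reception demands the strict threshold $T_s$ on sensed interference, which is stronger than what the SINR condition alone requires. The concentric-rings interference bound inside the proof of Lemma~\ref{le:protran} must therefore be reused (with an appropriately larger effective radius) so that the same style of constant probability $\kappa$ still holds against the stricter $T_s$-threshold; once this $\kappa$ is pinned down, the rest is a routine geometric tail estimate plus union bound.
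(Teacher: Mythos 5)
Your overall structure (round count, domination via the halting rule plus triangle inequality, and the probabilistic survival bound with $p_0 \ge (\kappa/(2\mu))^2$ and $\gamma = 3/p_0$ followed by a union bound over pairs) matches the paper's proof. However, there is a genuine gap in your independence argument. You assert that Lemma~\ref{le:MISinform} ``rules out the dangerous case for nodes that enter $S$ during the main rounds,'' and conclude that the \emph{only} way two $r$-neighbors can both end up in $S$ is for both to survive all $\gamma\ln n$ rounds. That is not what Lemma~\ref{le:MISinform} gives you: it only guarantees that the still-active $r$-neighbors of a joining node halt \emph{after} that round. It says nothing about two $r$-neighbors $u,v$ that both satisfy the join condition \emph{within the same round} (both send \textsc{Hello} in slot~1 and both receive an \textsc{Ack} in slot~2); each would then transmit \textsc{In} and enter $S$ simultaneously, and your case analysis never excludes this. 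The paper closes this case with a separate geometric argument: if $u$ joins, some $r$-neighbor $w$ of $u$ had a \emph{clear} reception, so no other node within $4r$ of $w$ transmitted \textsc{Hello}; hence any other simultaneous joiner must be at distance at least $4r - r = 3r$ from $u$, and in particular cannot be an $r$-neighbor. You need to add this step (or an equivalent one) — it is precisely the reason the algorithm insists on clear receptions before an \textsc{Ack} is sent.

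The remainder of your argument is sound and consistent with the paper. Your concern about whether Lemma~\ref{le:protran} delivers \emph{clear} receptions (i.e., meets the stricter interference threshold $T_s$ rather than merely the SINR decoding threshold) is legitimate; the paper glosses over this in exactly the way you propose to resolve it, by reusing the concentric-annuli interference bound with adjusted constants so that a constant success probability $\kappa$ still holds against the $T_s$-threshold. Once you patch the simultaneous-join case, your proof is equivalent to the paper's.
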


\begin{proof}
By definition of the algorithm, the nodes halt by either joining $S$ or after receiving \textsc{In} from a neighbor. Thus, the solution is an $r$-dominating set of $X$, and hence a $2r$-dominating set of $V$.
It remains to show that $S$ is $r$-independent.

Let $u$, $v$ be nodes in $S$, and suppose without loss of generality that $u$ was added no later than $v$.
If both joined $S$ during the same round 
then they must be of distance at least $3r$ apart (since an $r$-neighbor of $u$ experienced a clear reception).
If $v$ joined $S$ later, it must be more than $r$ away from $u$, since $u$ notified all its $r$-neighbors with an \textsc{In} message, by Lemma \ref{le:MISinform}.
Finally, we show that, with high probability, no $r$-neighbors remain active after all the $\gamma\ln n$ rounds.

Let $u$ and $v$ be $r$-neighbors. 
Observe that the sum of transmission probabilities of nodes in any $r$-ball $E_w^r$ is at most $1/2$ (as the density is at most $\mu$ and each node transmits with probability $1/(2\mu)$).
This allows us to apply Lemma~\ref{le:protran} to determine successful transmissions. 
If $u$ transmits a \textsc{Hello} in a given round, then its neighbors receive it clearly with 
probability is at least $\frac{1}{2\mu}\cdot \kappa$, 
and if a clear reception occurs, then $u$ receives \textsc{Ack}, also with probability at least $\frac{1}{2\mu}\cdot \kappa$.
Hence, if both $u$ and $v$ are active at the beginning of a round, they stay active after that round with probability at most $1 - (\kappa/2\mu)^2$.
Thus, the probability that they stay active for 
all $\gamma \ln n$ rounds is at most $(1-(\kappa/2\mu)^2)^{\gamma\ln n} \le e^{-3\ln n} = n^{-3}$. 
By the union bound, the probability that some $r$-adjacent pairs remains active is at most $n^{-1}$.
\end{proof}

\section{Aggregation Structure Construction}
\label{sec:structure}


We give in this section an algorithm to form a hierarchical aggregation structure.
The algorithm has three parts: forming a dominating set, coloring the dominators to separate them spatially,
and finally forming a tree of reporters to speed up aggregation using the multiple channels.




\subsection{Communication Backbone}\label{sec:clustering}

To reduce computation and communication, we construct an overlay in the form of a connected dominating set.
The dominators function as local leaders of their respective \emph{clusters}, managing the local computation, as well as participating in disseminating the information globally. The dominators are colored to ensure good spatial separation between clusters of same color, which in turn allows the local computation to ignore interference from other clusters.

\subsubsection{Computing a Dominating Set}
%
We first form a \emph{clustering}, which is a function assigning each node a dominator within a specified distance $r$. 

Let $t=\left(\frac{\alpha-2}{48\beta(\alpha-1)}\right)^{1/\alpha}$ and $r_c=\min\{\frac{t}{2t+2}\cdot R_{\epsilon/2}, \frac{\epsilon R_T}{4}\}$. Recall that $R_{\epsilon/2}=(1-\frac{\epsilon}{2})R_T$. We adapt the algorithm of Scheideler et al.~\cite{SRS08} to compute an $r_c$-dominating set of constant density. In that algorithm, a node that receives a message from a dominator becomes a dominatee;
here, we simply additionally require that the node receive a message from a dominator within distance $r_c$. Using the same argument as in~\cite{SRS08}, we have the following result.

\begin{lemma}\label{le:clustering}
There is a distributed algorithm running in time $O(\log n)$ that produces, with high probability, an $r_c$-dominating set of constant density $\mu$, along with the corresponding clustering function. 
\end{lemma}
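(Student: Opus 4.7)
The plan is to invoke the Scheideler--Richa--Schmid algorithm of~\cite{SRS08} with a single modification and then verify that neither the $O(\log n)$ time bound nor the constant-density guarantee is affected. Recall that the original algorithm has each still-undominated node transmit a beacon in each round with a carefully tuned probability; a listening node that receives a beacon cleanly (signal above threshold, interference below threshold) promotes the sender to be a dominator and itself to a dominatee. The modification is purely local at the listener: it accepts a beacon only when the received signal strength exceeds $P/r_c^\alpha$. By the inverse power law this is equivalent to the beaconing node being within Euclidean distance $r_c$, and the model's signal-strength measurement feature (the RSSI-like primitive described in Section~\ref{sec:model}) makes this test implementable without any further hardware assumption.

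For correctness I would proceed in two parts. For the dominating-set property, consider any node $u$ that still has no dominator within distance $r_c$. The SRS08 analysis maintains the invariant that the aggregate transmission probability inside a small ball is $\Theta(1)$; restricting this invariant to the $r_c$-ball $E_u^{r_c}$ and applying Lemma~\ref{le:protran} with $R=r_c$ shows that in each round some node in $E_u^{r_c}$ transmits a beacon that is cleanly received at a listener inside the ball with constant probability. That listener also clears the stricter acceptance threshold, so $u$ acquires a qualified dominator in that round. The standard potential argument of~\cite{SRS08} then shows that after $O(\log n)$ rounds every node is dominated with high probability.

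For the density bound, observe that the acceptance rule enforces an $\Omega(r_c)$ separation between any two elected dominators whose beacons are accepted at listeners in overlapping neighborhoods: two such dominators would have to both beat the clear-reception interference threshold at nearby listeners, which the concentric-annuli argument used in Lemma~\ref{le:detertra} rules out beyond a constant number per $r_c$-ball. This yields the constant density bound $\mu$, and the clustering function is produced for free, since each dominatee simply records the identity of the dominator whose beacon it first accepted.

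The principal concern, and the only place the adaptation could in principle fail, is whether shrinking the acceptance range from the standard $\Theta(R_T)$ down to $r_c$ forces the beaconing probabilities to be retuned so aggressively that the SRS08 potential argument breaks. Since $r_c$ is a fixed constant fraction of $R_T$ (by the definitions of $t$ and $R_{\epsilon/2}$), every ratio of the form $r_c/R_T$ or $R_T/r_c$ appearing in the interference bounds and probability-tuning expressions changes only by an $O(1)$ factor; these rescalings are absorbed into the constants $\mu$ and the hidden constant in $O(\log n)$, so the analysis of~\cite{SRS08} carries over verbatim and the lemma follows.
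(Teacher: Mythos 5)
Your proposal matches the paper's treatment: the paper likewise obtains Lemma~\ref{le:clustering} by running the dominating-set algorithm of~\cite{SRS08} with the single modification that a node becomes a dominatee only upon receiving a message from a dominator within distance $r_c$ (detectable via the signal-strength primitive), and then asserts that the original analysis carries over. Your added justification that the rescaling from $\Theta(R_T)$ to $r_c$ only perturbs constants is a reasonable elaboration of the paper's terse ``using the same argument as in~\cite{SRS08}.''
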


\subsubsection{Cluster Coloring and a TDMA Scheme of Clusters}\label{sec:tdma}
%
To separate the clusters spatially, we color the dominators so that those within distance $R_{\epsilon/2}$ are assigned different colors, as done by the following algorithm.

The algorithm operates in $\phi$ phases, where $\phi$ is an upper bound on the number of dominators in any disk of radius $R_{\epsilon/2}$. A standard area argument gives an upper bound of 
$\phi := 4\mu(R_{\epsilon/2}+r_c/2)^2/r_c^2\in O(1)$.
In each phase $i$, dominators that are still not colored compute a $(R_{\epsilon/2},R_{\epsilon})$-ruling set, using the
algorithm of Sec.~\ref{sec:mis}, and assign the nodes of the ruling set the color $i$.

The following result follows easily from Lemma~\ref{le:MISresult}.

\begin{lemma}\label{le:coloring}
Given an upper bound $\phi$ on the dominator density,
there is an algorithm for coloring the dominators (assigning $R_{\epsilon/2}$-neighbors different colors)
using $\phi$ colors in $O(\log n)$ rounds.
\end{lemma}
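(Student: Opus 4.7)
The plan is to combine three pieces: a running-time bound inherited from Lemma~\ref{le:MISresult}, a validity check that uses the $R_{\epsilon/2}$-independence of each ruling set, and a density-based argument that bounds the number of phases by $\phi$.

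For the running time, each of the $\phi = O(1)$ phases consists of a single invocation of the ruling set subroutine of Sec.~\ref{sec:mis}, which by Lemma~\ref{le:MISresult} completes in $O(\log n)$ rounds with high probability. A union bound over the constantly many invocations preserves a high-probability success guarantee, so the total running time is $\phi \cdot O(\log n) = O(\log n)$.

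For validity of the coloring, I would observe that the set of dominators receiving color $i$ is precisely the ruling set $S_i$ output by the subroutine in phase $i$, which is $R_{\epsilon/2}$-independent by Lemma~\ref{le:MISresult}. Hence any two dominators colored $i$ are separated by more than $R_{\epsilon/2}$, so they are not $R_{\epsilon/2}$-neighbors and the coloring constraint is satisfied.

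The core step, which I expect to be the main obstacle, is the termination argument: no dominator can survive all $\phi$ phases uncolored. I would proceed by contradiction. Suppose some dominator $v$ remains uncolored after phase $\phi$. Because each $S_i$ is maximal, in the sense that the ruling set is also dominating for the set of then-uncolored dominators, $v$ must have a witness $u_i \in S_i$ with $d(u_i, v) \le R_{\epsilon/2}$ in every phase $i \in \{1,\dots,\phi\}$; otherwise $v$ itself could have been added to $S_i$. These witnesses are pairwise distinct because they carry different colors, and they all lie inside the disk $E_v^{R_{\epsilon/2}}$. Together with $v$, this produces $\phi + 1$ distinct dominators within a single disk of radius $R_{\epsilon/2}$, contradicting the choice of $\phi$ as an upper bound on dominator density. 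The only subtlety worth flagging is that the guarantees of Lemma~\ref{le:MISresult} transfer when the subroutine is run on a subset of the dominators, since the density parameter can only decrease under restriction; with this remark in place, the three pieces compose into a complete proof.
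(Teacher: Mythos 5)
Your proof is correct and follows exactly the route the paper intends: the paper itself offers no written proof (it only remarks that the lemma ``follows easily from Lemma~\ref{le:MISresult}''), and your three pieces---the $O(\log n)$ per-phase bound with a union bound over $\phi=O(1)$ phases, independence of each color class from the ruling-set guarantee, and the witness-counting argument showing that an uncolored survivor of all $\phi$ phases would force $\phi+1$ dominators into one $R_{\epsilon/2}$-ball---are precisely the details being elided. The termination argument is the only substantive step, and yours is sound, including the flagged point that the subroutine's guarantees persist when run on a subset of the dominators.
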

The cluster coloring yields the following TDMA scheme of $\phi$ rounds:
only nodes in clusters of color $i$ transmit in the $i$-th round,  for $i=1, 2, \ldots, \phi$.
A clustering with a proper coloring as described above is called \emph{well-separated}.
Lemma~\ref{le:detertra} and the setting of $r_c$ imply the following result.

\begin{lemma}\label{le:tdma}
If at most one node transmits in each cluster (on a given channel), and only in clusters of a particular color, 
then each such transmission is received by all nodes within the same cluster.
\end{lemma}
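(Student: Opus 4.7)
The plan is to invoke Lemma~\ref{le:detertra} with carefully chosen distance parameters $r_1$ and $r_2$, using the spatial separation guaranteed by the cluster coloring for $r_1$ and the small cluster radius $r_c$ for $r_2$. The parameters $r_c$ and $t$ were in fact selected precisely so that this application goes through, so the work is mostly to unpack definitions and check inequalities rather than discover anything new.

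First, I would establish a lower bound on the pairwise distance between any two nodes that transmit simultaneously on the given channel in clusters of the same color. Since each such transmitter lies within $r_c$ of its own dominator (by the clustering of Lemma~\ref{le:clustering}), and any two dominators of the same color are more than $R_{\epsilon/2}$ apart (by Lemma~\ref{le:coloring}), the triangle inequality shows that the transmitters form an $r_1$-independent set for $r_1 := R_{\epsilon/2} - 2r_c$. Next, I would observe that each intended receiver $w$ lies within distance $r_2 := 2r_c$ of its cluster's transmitter $v$, since both $v$ and $w$ are within $r_c$ of their common dominator.

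It then remains to verify the hypothesis $r_2 \le \min\{t\cdot r_1,\, R_T/2\}$ of Lemma~\ref{le:detertra}, where $t = \left(\frac{\alpha-2}{48\beta(\alpha-1)}\right)^{1/\alpha}$. The choice $r_c \le \frac{t}{2t+2} R_{\epsilon/2}$ rearranges to $R_{\epsilon/2} \ge (2+2/t)\, r_c$, which gives $r_1 \ge (2/t)\, r_c$ and hence $t \cdot r_1 \ge 2r_c = r_2$. The second bound follows from $r_c \le \epsilon R_T/4 < R_T/4$, so that $r_2 = 2r_c < R_T/2$. With the hypothesis satisfied, Lemma~\ref{le:detertra} yields that the message from each transmitter $v$ is received by every node listening on the same channel within distance $r_2 = 2r_c$ of $v$; in particular, by every node in $v$'s own cluster.

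The only mild obstacle is notational: Lemma~\ref{le:coloring} is stated for $R_{\epsilon/2}$-neighbors among dominators and Lemma~\ref{le:detertra} is stated for transmitting nodes, so I would be careful to go via the dominators when bounding distances between transmitters. Otherwise, the argument is a straightforward consequence of the parameter choices in Section~\ref{sec:clustering} and the two preceding lemmas.
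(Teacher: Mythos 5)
Your proof is correct and is exactly the argument the paper intends: the paper gives no explicit proof beyond the remark that Lemma~\ref{le:detertra} and the setting of $r_c$ imply the result, and your write-up fills in precisely those details (taking $r_1 = R_{\epsilon/2}-2r_c$, $r_2 = 2r_c$, and checking $r_2\le\min\{t\cdot r_1, R_T/2\}$ from the definition of $r_c$). The parameter verifications are all accurate.
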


Thus, when using the TDMA scheme, communication within clusters can proceed deterministically without concern for outside interference (as long as only one node transmits in a cluster).
For simplicity, in the subsequent sections, we implicitly assume that clusters of the same color execute the algorithm together in the rounds assigned by the TDMA scheme and only consider the algorithm execution of the clusters with a particular color.
This assumption incurs an overhead of only a constant factor $\phi$ on the running time.

\subsection{Reporter Tree Construction in Clusters}\label{sec:reportertreecon}

The tree construction proceeds in three steps. We first estimate the number of nodes in the cluster, which determines the number of channels to which to assign the nodes randomly. Within each channel, a leader known as a reporter is then elected.
Finally, the reporters automatically organize themselves into a complete binary tree, using the channel number as a heap number in the tree.

Denote by $C_v$ the cluster consisting of dominator $v$ and its dominatees.
Denote by $f_v =\min\{\lceil|C_v|/(c_1\log n)\rceil,\mathcal{F}\}$ the number of channels used in cluster $C_v$, where $c_1=24$. 
The setting of $f_v$ ensures (by Chernoff bound) that, with high probability, each channel is assigned at least one node.

The following theorem summarizing the results of this subsection follows from Lemmas~\ref{le:size}, \ref{le:hies}, and \ref{le:treecon} given later.

\begin{theorem}\label{th:tree}
Suppose clusters are well-separated.
There is an algorithm that for each cluster elects a reporter on each of its channels
and organizes them into a complete binary tree, using $O(\log^2 n)$ rounds with high probability.
If a $\log^{O(1)}n$-approximation of $\Delta$ is given, then $O(\Delta/\mathcal{F}+ \log n \cdot \log\log n)$ rounds suffice.
%
\end{theorem}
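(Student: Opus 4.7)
The plan is to decompose the theorem along the three steps already sketched at the start of Sec.~\ref{sec:reportertreecon}: (i) estimate the cluster size $|C_v|$ at each dominator (hence $f_v$); (ii) partition each cluster across its $f_v$ channels and elect one reporter per channel; (iii) link the reporters into the complete binary tree via their channel indices. I expect the three lemmas cited in the statement to match these three subproblems, and the overall bound to be the sum of their contributions, bookkept through the $\phi$-slot TDMA schedule of Sec.~\ref{sec:tdma}.

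For (i), the dominator needs only a constant-factor estimate of $|C_v|$. Exploiting Lemma~\ref{le:tdma}, any single intra-cluster transmission is deterministically received throughout the cluster, so one can probe geometrically: for each $i=0,1,\ldots,\lceil\log n\rceil$, have every dominatee transmit independently with probability $p_i=2^{-i}$ for $\Theta(\log n)$ slots on a designated channel, and let the dominator declare $p_i|C_v|=\Theta(1)$ once it hears singleton transmissions in roughly the predicted fraction of slots. A standard Chernoff analysis (Lemma~\ref{lem:chernoff}) identifies the correct scale w.h.p.\ in $O(\log^2 n)$ rounds, and this drops to $O(\log n\cdot\log\log n)$ under a $\log^{O(1)}n$-approximation of $\Delta$ because then only $O(\log\log n)$ scales are relevant. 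For (ii), every dominatee picks a channel uniformly from $\{1,\ldots,f_v\}$; the choice $f_v=\min\{\lceil|C_v|/(c_1\log n)\rceil,\mathcal{F}\}$ with $c_1=24$ forces $|C_v|/f_v\ge c_1\log n$, so Lemma~\ref{lem:chernoff} gives $\Theta(|C_v|/f_v)$ nodes on every channel with high probability, including at least one. Running the ruling-set algorithm of Sec.~\ref{sec:mis} with $r$ exceeding the cluster diameter on each channel then elects a single reporter; channels do not interfere, the well-separated coloring plus Lemma~\ref{le:tdma} isolates clusters, and Lemma~\ref{le:MISresult} gives $O(\log n)$ rounds. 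For (iii), each reporter knows its channel index $i$ and hence $\lfloor i/2\rfloor$, $2i$, $2i+1$; in two slots the reporter retunes to $\lfloor i/2\rfloor$ (first if $i$ is even, else second) and announces its ID, so each parent reliably learns both children while each child already knows its parent's channel.

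Summing (i)--(iii) yields $O(\log^2 n)$ in general; under a $\log^{O(1)}n$-approximation of $\Delta$, (i) drops to $O(\log n\cdot\log\log n)$, while the residual intra-cluster enumeration a reporter may have to perform within its channel---touching the $O(|C_v|/f_v)=O(\Delta/\mathcal{F})$ participants---contributes the $\Delta/\mathcal{F}$ term. I expect (i), the cluster-size estimation, to be the main technical obstacle: the estimator must be simultaneously correct across every cluster from purely local observations, and must be robust under the merely constant reception probability of Lemma~\ref{le:protran}. Detecting the transition between $p_i|C_v|\ll 1$ and $p_i|C_v|\gg 1$ cleanly therefore requires $\Theta(\log n)$ trials per scale with careful handling of the constant fraction of slots lost to interference. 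Once (i) is established, (ii) is a routine application of the ruling-set machinery of Sec.~\ref{sec:mis} and (iii) is essentially bookkeeping over the deterministic TDMA schedule.
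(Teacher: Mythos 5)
Your decomposition into (i) cluster-size estimation, (ii) per-channel reporter election, and (iii) an implicit heap-indexed tree matches the paper's proof, which cites exactly Lemmas~\ref{le:size}, \ref{le:hies} and \ref{le:treecon}; your steps (ii) and (iii) are essentially the paper's arguments and are fine. The genuine gap is in step (i), in how you obtain the improved bound when a $\log^{O(1)}n$-approximation $\hat\Delta$ of $\Delta$ is available. You assert that ``only $O(\log\log n)$ scales are relevant,'' but knowing $\Delta$ up to a polylogarithmic factor localizes nothing about an individual cluster size $|C_v|$, which can still be anywhere between $1$ and $\Delta+1$. Your geometric probe therefore still has to sweep $\Theta(\log\hat\Delta)=\Theta(\log n)$ scales at $\Theta(\log n)$ slots each, so it remains $O(\log^2 n)$ and never yields the claimed $O(\log n\cdot\log\log n)$.

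The paper gets the second bound by a case split together with a genuinely multichannel subroutine (Lemmas~\ref{lem:clustersize-approx1} and \ref{lem:low-content-size1}). If $\hat\Delta/\mathcal{F}\ge\log^{\hat c+2}n$, then $\Delta/\mathcal{F}\ge\log^2 n$ and the plain single-channel estimator's $O(\log^2 n)$ cost is already $O(\Delta/\mathcal{F})$. Otherwise, every dominatee first picks one of the $\mathcal{F}$ channels uniformly at random, so each channel holds only $\log^{O(1)}n$ nodes w.h.p.; a leader is elected on each nonempty channel, the doubling estimator is run \emph{per channel} with the polylogarithmic per-channel population bound --- this is the point where the number of scales drops to $O(\log\log n)$ --- and the per-channel counts are then summed up a binary tree over the leaders in $O(\log\mathcal{F})$ additional rounds before the dominator broadcasts the total. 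Without this spread-then-aggregate step your argument does not establish the $O(\Delta/\mathcal{F}+\log n\cdot\log\log n)$ bound. Relatedly, your attribution of the $\Delta/\mathcal{F}$ term to a reporter ``enumerating its $O(|C_v|/f_v)$ participants'' is misplaced: tree formation costs only $O(\log\mathcal{F})$ rounds (Lemma~\ref{le:treecon}), and the $\Delta/\mathcal{F}$ term in the theorem arises entirely from the size-estimation case split above.
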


Since the number of channels used in a cluster depends on its size, we first need to approximate that quantity
and make it known to all dominatees. 

\subsubsection{Cluster Size Approximation}

Suppose an upper bound $\hat{\Delta}$ on the size of any cluster is known.
Consider the following \textbf{Cluster-Size-Approximation (CSA)} problem:
Given a set of nodes partitioned into well-separated clusters, each of size at most $\hat{\Delta}$,
compute a constant approximation of the cluster size and disseminate it to all nodes in the cluster.
In the most general case, $\hat{\Delta}$ can be taken to be $n$.

\paragraph{Cluster Size Approximation with Large $\hat{\Delta}$\vspace{3mm}\\}
The CSA algorithm uses only the first channel, i.e., all nodes operate on a single channel. 
The stage is divided into $\lceil \log \hat{\Delta} \rceil$ phases, each of which contains $\gamma_1\ln n +1$ rounds, where $\gamma_1$ is a constant to be determined.

In all but the last round of a phase, each dominatee $u$ transmits with a specified probability, while the dominators listen.
In rounds of phase $j$, the common transmission probability $p_j$ is $\frac{\lambda}{\hat\Delta} \cdot 2^{j-1}$,
where $\lambda = 1/2$.
Namely, the initial probability is $\lambda/\hat{\Delta}$, and the probability is doubled after each phase.
In the last round of each phase, the dominator sends out a notification if it received enough messages from the nodes in its cluster, in which case all the nodes terminate the algorithm.
If a dominator receives at least $\omega_1 \ln n$ messages in phase $j$ from nodes in its cluster, where $\omega_1 = 36$, 
then it settles for the estimate of $\hat{|C_v|} := \lceil\hat{\Delta}\cdot2^{-j+1}\rceil$ for the number of nodes in its cluster.
Note that, if the contention $P_c(v)$ is constant when the algorithm terminates, then 
$|\hat{C}_v| = \Theta(|C_v|)$, a constant approximation of the true cluster size.


\smallskip

We start with preliminary results before deriving the main result on CSA.

\begin{lemma}
Let $v$ be a dominator and consider a phase of the CSA algorithm.
The following holds with probability $1 - n^{-3}$:
If $P_c(v) < \omega_1/(4\gamma_1)$, then $v$ receives fewer than $\omega_1\ln n$ messages in the phase,
while if $P_c(v) \in (\lambda/2,\lambda]$ and $P_c(w) \le \lambda = 1/2$ for every dominator $w$,
then $v$ receives at least $\omega_1\ln n$ messages.
\label{lem:term}
\end{lemma}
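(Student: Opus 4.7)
The plan is to express the count $X$ of cluster messages received by $v$ during the phase as a sum of independent Bernoulli indicators $Y_r$ over $r=1,\ldots,\gamma_1\ln n$, where $Y_r=1$ iff $v$ decodes some message from $C_v$ in round $r$. Independence across rounds is immediate because each round uses fresh coin flips, so Lemma~\ref{lem:chernoff} applies to $X=\sum_r Y_r$. Both bullets then reduce to sharp per-round bounds on $\Pr[Y_r=1]$.

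First I would establish the per-round bounds. The upper bound $\Pr[Y_r=1]\le P_c(v)$ is a union bound over the nodes of $C_v$ choosing to transmit. For the lower bound needed in the second bullet, I would apply Lemma~\ref{le:protran} with $R$ a constant multiple of $r_c$: the hypothesis $P_c(w)\le\lambda=1/2$ for every dominator $w$, combined with well-separation (any $r_c$-ball meets only $O(1)$ clusters), bounds the sum of transmission probabilities in any $R$-ball by a constant, so the lemma yields a constant $\kappa$ such that, conditioned on any fixed $u\in C_v$ transmitting, $v$ decodes its message with probability at least $\kappa$. Two simultaneous transmissions from $C_v$ each arrive at $v$ at power at least $P/r_c^\alpha$, so neither attains SINR $\ge\beta\ge 1$; hence $v$ decodes at most one cluster message per round, and the events ``$v$ decodes $u$'s message'' are disjoint over $u\in C_v$. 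Summing gives $\Pr[Y_r=1]\ge\kappa\,P_c(v)$.

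For the first bullet, $\mu := E[X]\le P_c(v)\gamma_1\ln n<\omega_1\ln n/4$. Stochastic domination by a binomial with this maximal mean, together with $\{X\ge\omega_1\ln n\}\subseteq\{X\ge 2\mu\}$, lets me apply~(\ref{eq:Chernoff1}) to get $\Pr[X\ge\omega_1\ln n]\le e^{-(\omega_1/4)\ln n/3}=n^{-\omega_1/12}$, which is at most $n^{-3}$ since $\omega_1=36$. For the second bullet, $\mu\ge\kappa\gamma_1\ln n\cdot\lambda/2$, and choosing $\gamma_1$ as a sufficiently large constant multiple of $\omega_1/\kappa$ forces $\mu\ge 2\omega_1\ln n$; then $\{X\le\omega_1\ln n\}\subseteq\{X\le\mu/2\}$ and the lower-tail bound~(\ref{eq:Chernoff3}) gives $\Pr[X\le\omega_1\ln n]\le e^{-\mu/8}\le n^{-3}$.

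The main obstacle is the lower bound on $\Pr[Y_r=1]$, which combines two ingredients: Lemma~\ref{le:protran} (for which the global hypothesis $P_c(w)\le\lambda$ is essential, to tame interference from faraway clusters on the same channel), and the at-most-one-decoding observation (which justifies disjointness and hence a linear-in-$P_c(v)$ summation over potential senders). Once these are in place, fixing $\gamma_1$ as a sufficiently large constant reduces the two bullets to routine Chernoff bookkeeping.
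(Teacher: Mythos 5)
Your proposal is correct and follows essentially the same route as the paper: per-round success probability at most $P_c(v)$ for the upper tail, at least $\kappa\,P_c(v)$ via Lemma~\ref{le:protran} for the lower tail, then Chernoff bounds (\ref{eq:Chernoff1}) and (\ref{eq:Chernoff3}) over the $\gamma_1\ln n$ independent rounds with the same choice of constants. Your explicit justification that the decoding events for distinct senders are disjoint (so that $\sum_{u}p_u\kappa$ really lower-bounds the per-round reception probability) is a small point the paper glosses over, but it is a refinement rather than a different argument.
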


\begin{proof}
Suppose first that $P_c(v) < \omega_1/(4\gamma_1)$.
The dominator $v$ receives a message in a round with probability at most $P_c(v)$, and therefore receives at most 
$\gamma_1 \ln n \cdot \omega_1/(4\gamma_1) = \frac{\omega_1}{4}\ln n$ messages during the phase, in expectation. By Chernoff bound (\ref{eq:Chernoff1}) using $\omega_1=36$, 
it holds with probability $1-n^{-3}$ that $v$ receives at most $\frac{\omega_1}{2}\ln n$ messages.

Suppose now that $P_c(v) \in (\lambda/2, \lambda]$ and that $P_c(w) \le \lambda$ for every dominator $w$.
By Lemma~\ref{le:protran}, if a dominatee transmits in a round, its dominator receives the message with constant probability $\kappa$. 
The probability that $v$ receives some message in a given round of the phase is then at least 
  $\sum_{w\in C_v}p_w\cdot \kappa = \kappa \cdot P_c(v) \geq \kappa\cdot\frac{\lambda}{2}$. 
Then during the first $\gamma_1\ln n$ rounds of phase $j$, $v$ receives at least expected $\kappa\cdot\frac{\lambda}{2}\cdot\gamma_1\ln n$ messages from its dominatees.
Setting $\gamma_1\geq 2\omega_1\cdot \frac{2}{\kappa\lambda}$, 
it follows from Chernoff bound (\ref{eq:Chernoff3}) that $v$ receives at least $\omega_1\ln n$ messages during the first subphase, with probability $1-n^{-3}$, 
in which case it notifies its dominatees to terminate the algorithm.
\end{proof}

\begin{lemma}
With a known upper bound $\hat{\Delta}$ on the maximum cluster size,
the CSA algorithm approximates the size of each cluster within a constant factor in $O(\log \hat{\Delta} \cdot \log n)$ rounds, with high probability. Using the naive bound of $\hat{\Delta} \le n$, the running time is $O(\log^2 n)$.
\label{lem:clustersize-approx1}
\end{lemma}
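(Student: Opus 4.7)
My approach is to analyze each cluster $v$ individually and show that it terminates at some phase $j_v$ whose resulting estimate $\hat{|C_v|}=\lceil\hat{\Delta}\cdot 2^{-j_v+1}\rceil$ is a constant-factor approximation of $|C_v|$. To anchor the analysis, I would define $J_v$ as the first phase $j$ for which $|C_v|\cdot p_j\in(\lambda/2,\lambda]$; since $p_j$ doubles each phase and starts at $\lambda/\hat{\Delta}$, one checks $J_v\le\lceil\log\hat{\Delta}\rceil$. The two claims I would establish, each with high probability, are: (a) cluster $v$ terminates no later than phase $J_v$; and (b) whenever cluster $v$ terminates, its contention satisfies $P_c(v)\ge\omega_1/(4\gamma_1)$.

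Given (a) and (b), the approximation bound is immediate. Since $\hat{|C_v|}=\lambda/p_{j_v}$ up to rounding and $P_c(v)=|C_v|\cdot p_{j_v}$, we have $\hat{|C_v|}=\lambda|C_v|/P_c(v)$. Claim (b) yields $P_c(v)\ge\omega_1/(4\gamma_1)$, and claim (a) ensures $j_v\le J_v$ so that $P_c(v)\le\lambda$, giving $|C_v|\le\hat{|C_v|}\le(4\gamma_1\lambda/\omega_1)\,|C_v|$. Claim (b) itself is a direct application of the first statement of Lemma~\ref{lem:term}: each phase contributes at most $n^{-3}$ probability of spurious termination, and union-bounding over $n$ clusters and $O(\log n)$ phases still leaves the complement with high probability.

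The main obstacle is claim (a), because the second statement of Lemma~\ref{lem:term} requires the global condition $P_c(w)\le\lambda$ for every dominator $w$, so clusters cannot be analyzed fully independently. I would resolve this by an induction over clusters ordered by increasing $J_v$, which corresponds roughly to decreasing cluster size. In the base case, the cluster with the smallest $J_v$ hits its target phase first: at that phase, every other cluster $w$ has $J_w\ge J_v$, hence $P_c(w)=|C_w|\cdot p_{J_v}\le\lambda$ by definition of $J_w$, so the hypothesis of Lemma~\ref{lem:term} holds and $v$ terminates with probability $1-n^{-3}$. Inductively, by the time a later cluster $v$ reaches phase $J_v$, every cluster $w$ with $J_w<J_v$ has already terminated (by the induction hypothesis) and now contributes $P_c(w)=0$, while every cluster $w$ with $J_w\ge J_v$ still satisfies $P_c(w)\le\lambda$ by the same definitional argument. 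A further union bound over clusters and phases absorbs the per-event failure probability.

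Finally, dissemination is automatic: in the last round of phase $j_v$ the dominator broadcasts the termination notification carrying the value $\hat{|C_v|}$, and by Lemma~\ref{le:tdma} this single intra-cluster transmission (within the TDMA slot for the appropriate color) is received deterministically by every node of the cluster. Each phase uses $\gamma_1\ln n+1=O(\log n)$ rounds, and the algorithm runs for at most $\lceil\log\hat{\Delta}\rceil$ phases, giving total time $O(\log\hat{\Delta}\cdot\log n)$; substituting the trivial bound $\hat{\Delta}\le n$ yields the $O(\log^2 n)$ result.
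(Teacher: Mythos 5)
Your proof is correct and follows essentially the same route as the paper: both arguments hinge on Lemma~\ref{lem:term} to show that a dominator terminates only when $\omega_1/(4\gamma_1)\le P_c(v)\le\lambda$, translate that window into the constant-factor bound $|C_v|\le\hat{|C_v|}\le(4\gamma_1\lambda/\omega_1)|C_v|$, and then argue that every cluster does in fact terminate within $\lceil\log\hat{\Delta}\rceil$ phases. The one place you genuinely diverge is in establishing the global invariant that $P_c(w)\le\lambda$ holds for every dominator in every phase, which is needed as a hypothesis of the second part of Lemma~\ref{lem:term}. The paper simply asserts this ``by applying the union bounds on the second part of Lemma~\ref{lem:term},'' which is circular as stated (the lemma's conclusion is being used to guarantee its own hypothesis); the paper's intended argument is a first-violation induction that appears only in a commented-out earlier draft. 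Your induction over clusters ordered by increasing $J_v$ --- using that $J_w\ge J_v$ forces $P_c(w)=|C_w|p_{J_v}\le|C_w|p_{J_w}\le\lambda$ definitionally, while clusters with $J_w<J_v$ have already terminated and contribute zero contention --- breaks this circularity cleanly and is, if anything, more rigorous than what the paper writes. The remaining pieces (claim (b) from the first part of Lemma~\ref{lem:term}, the dissemination via Lemma~\ref{le:tdma}, and the round count) match the paper exactly.
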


\begin{proof}
By the first part of Lemma~\ref{lem:term}, using union bounds, it holds with probability at least $1-1/n$ that whenever a dominator $v$ explicitly terminates the algorithm, then $\omega_1/(4\gamma_1)\leq P_c(v)\leq \lambda$. 
Assume that $v$ terminates the algorithm in phase $j$. 
The transmission probability during phase $j$ is $p_j=\frac{1}{2\hat\Delta}\cdot 2^{j-1}$. 
Then, $|C_v|=\frac{P_c(v)}{\frac{1}{2\hat\Delta}\cdot 2^{j-1}}\in [2\hat\Delta\cdot 2^{-j+1}\cdot \omega_1/4\gamma_1, 2\hat\Delta\cdot 2^{-j+1}\cdot \lambda]$. 
Hence, $|\hat{C_v}|=\lceil\hat{\Delta}\cdot2^{-j+1}\rceil\in\Theta(|C_v|)$. In other words,  the estimate $|\hat{C_v}|$ obtained is always a constant approximation of the true cluster size $|C_v|$.
The algorithm is run for at most $\log \hat{\Delta} = O(\log n)$ phases, for a $O(\log^2 n)$ bound on the time complexity.
It remains to argue that the algorithm is explicitly terminated.


By applying the union bounds on the second part of Lemma~\ref{lem:term}, it holds with probability at least $1-1/n$ that $P_c(v) \le \lambda$ is satisfied for every vertex in each phase. 
Initially, $p_1 = \lambda/\hat{\Delta}$, for each dominatee $u$, in which case $P_c(v) \le \lambda$ is satisfied.
If the algorithm operates for all the $\lceil \log \hat{\Delta}\rceil$ phases, then $p_j \ge \lambda/2$ in the last phase $j$, 
in which case $P_c(v) \ge \lambda/2$.
Thus, for each dominator $v$, there is a phase in which the conditions of the second part of Lemma~\ref{lem:term} are satisfied, in which case the dominator terminates the algorithm, with high probability.
\end{proof}

\paragraph{Cluster Size Approximation with Small $\hat{\Delta}$\vspace{3mm}\\}

For the case that $\hat\Delta\leq\mathcal{F}\log^{c}n$, for constant $c\geq1$, the Cluster-Size-Approximation problem
can be solved more efficiently. The basic process is as follows: First, each dominatee selects a
channel uniformly at random. Then, on each channel, the nodes selecting that channel elect a leader and execute the CSA algorithm 
to obtain constant approximation of the number of dominatees in the channel. Finally, the dominator
obtains a constant approximation of the cluster size by polling the estimates from the leaders on each channel, and
sends the estimate to its dominatees on the first channel in the last round. The detailed algorithm and analysis of the following result
are given in the Appendix.

\begin{lemma}
Given knowledge of $\hat{\Delta}$ satisfying $\hat{\Delta} \le \mathcal{F}\log^c n$ for some constant $c\geq 1$,
we can get a constant approximation of the size of each cluster in $O(\log n \cdot \log\log n)$ rounds, with high probability.
\label{lem:low-content-size1}
\end{lemma}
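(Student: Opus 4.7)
The plan is to parallelize the CSA algorithm of Lemma~\ref{lem:clustersize-approx1} across the $\mathcal{F}$ channels, so that each channel runs an independent instance on a much smaller effective upper bound.  First, each dominatee of the cluster $C_v$ would independently pick a uniformly random channel from $\{1,\ldots,\mathcal{F}\}$; write $n_i$ for the number of dominatees landing on channel $i$.  The $n_i$ are identically distributed binomials with mean $|C_v|/\mathcal{F}\le\hat\Delta/\mathcal{F}\le\log^c n$, so a Chernoff bound together with a union bound over the $\mathcal{F}\le n$ channels yields, with high probability, a uniform upper bound $\hat\Delta'=\Theta(\log^c n)$ on every $n_i$.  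This $\hat\Delta'$ will serve as the per-channel size bound fed into the large-$\hat\Delta$ CSA in the next step.

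Next, on each channel in parallel I would run two sub-protocols.  First, a leader election among the dominatees assigned to that channel, using the within-cluster transmission primitive of Lemma~\ref{le:tdma} together with the ruling-set machinery of Sec.~\ref{sec:mis}, picking a unique channel leader in $O(\log n)$ rounds.  Second, a copy of the CSA algorithm of Lemma~\ref{lem:clustersize-approx1} with the channel leader in the role of the dominator and input upper bound $\hat\Delta'=\Theta(\log^c n)$, producing a constant-factor estimate $\hat n_i$ of $n_i$ in $O(\log\hat\Delta'\cdot\log n)=O(\log\log n\cdot\log n)$ rounds w.h.p.  The per-channel executions do not interfere with one another because each is confined to its own channel, and the TDMA scheme preserves independence across clusters of the same color.

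The dominator $v$ then aggregates by polling the channel leaders, reading $\hat n_i$ off channel $i$, and outputs $\sum_i\hat n_i$ as its estimate of $|C_v|$.  Since each $\hat n_i$ is a constant-factor approximation of $n_i$ and $\sum_i n_i=|C_v|$, this is a constant-factor approximation of the true cluster size, which $v$ disseminates to its dominatees in a single final round via Lemma~\ref{le:tdma}.  High-probability guarantees propagate throughout via the $\Theta(\log n)$ repetitions already baked into the CSA subroutine combined with a union bound over the sub-executions.

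The step I expect to require the most care is the polling phase, because the dominator can attend to only one channel per slot, so naive round-robin polling of all $\mathcal{F}$ channels costs $\Omega(\mathcal{F})$ rounds, which can exceed the $O(\log n\log\log n)$ budget.  The plan to circumvent this is to restrict polling (and the initial random channel assignment) to an effective set of $\hat f\le\mathcal{F}$ channels chosen so that (a)~each used channel still receives enough dominatees for the Chernoff concentration giving $\hat\Delta'$ to go through, and (b)~$\hat f$ is small enough to be absorbed into the $O(\log n\log\log n)$ budget under the hypothesis $\hat\Delta\le\mathcal{F}\log^c n$.  Calibrating $\hat f$ so that (a) and (b) hold simultaneously, and verifying that the resulting constant-factor per-channel estimates combine cleanly into a constant-factor cluster-size estimate, is the technical heart of the argument.
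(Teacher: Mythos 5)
Your overall decomposition is exactly the paper's: each dominatee picks a uniformly random channel, a Chernoff bound caps the per-channel occupancy at $\Theta(\log^c n)$, a leader is elected on each channel via the ruling-set machinery, the large-$\hat\Delta$ CSA of Lemma~\ref{lem:clustersize-approx1} is run per channel with the $\Theta(\log^c n)$ bound in $O(\log n\log\log n)$ rounds, and the dominator combines the per-channel estimates and disseminates the sum. The gap is precisely at the step you flagged: getting the $\mathcal{F}$ per-channel estimates to the dominator. Your proposed repair --- shrinking the set of used channels to some $\hat f\le\mathcal{F}$ so that round-robin polling fits the budget --- cannot be calibrated for the full parameter range. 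Polling within budget forces $\hat f=O(\log n\log\log n)$, but then the per-channel load is $\hat\Delta/\hat f$, and under the hypothesis $\hat\Delta\le\mathcal{F}\log^c n$ with, say, $\mathcal{F}=\sqrt n$, this load is polynomial in $n$; the per-channel CSA then costs $\Theta(\log(\hat\Delta/\hat f)\cdot\log n)=\Theta(\log^2 n)$, blowing the budget. No choice of $\hat f$ satisfies both constraints simultaneously, so the lemma would only be recovered for $\mathcal{F}=\log^{O(1)}n$.

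The paper's resolution is to keep all $\mathcal{F}$ channels and replace polling by a convergecast: the channel leaders are arranged into a complete binary tree rooted at the dominator, heap-ordered by channel index exactly as in the reporter tree of Sec.~\ref{sec:reportertreecon}, and the estimates are aggregated level by level (each child transmitting on its parent's channel) in $O(\log\mathcal{F})$ rounds, with each hop succeeding deterministically by Lemma~\ref{le:tdma} under the cluster TDMA scheme. A secondary issue you would also need to handle is that some channels may receive no dominatees at all, leaving holes in the tree; the paper patches this with auxiliary nodes and an ack mechanism, whereby a child that gets no acknowledgement from its parent concludes the parent is missing and assumes its role in the remainder of the convergecast. (Your condition (a), that every used channel receive enough nodes, is not actually needed --- only the upper tail of the occupancy matters, and empty channels are tolerated.)
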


We can combine the two cluster size estimation procedures if $\hat \Delta$ is a $\log^{\hat c}n$-approximation of $\Delta$ for some constant $\hat c\geq 0$:
When $\hat{\Delta}/\mathcal{F} \le \log^{\hat c+2} n$, Lemma \ref{lem:low-content-size1} gives a bound of $O(\log n \cdot \log\log n)$ rounds, while otherwise the bound of Lemma \ref{lem:clustersize-approx1} is $O(\log^2 n) = O(\Delta/\mathcal{F})$ rounds. Hence, based on Lemmas \ref{lem:clustersize-approx1} and \ref{lem:low-content-size1}, we have the following result.

\begin{lemma}\label{le:size}
There is a constant-approximation algorithm for \textbf{Cluster-Size-Approximation} that runs in $O(\log^2 n)$ rounds, with high probability. When given a $\log^{O(1)}n$-approximation of $\Delta$,
there is a constant-approximation algorithm that runs in $O(\Delta/\mathcal{F}+ \log n \cdot \log\log n)$ rounds, with high probability.
\end{lemma}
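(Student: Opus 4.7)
The plan is to derive Lemma~\ref{le:size} by a direct case analysis that glues together the two previous cluster-size approximation procedures, namely Lemma~\ref{lem:clustersize-approx1} (which handles a generic upper bound $\hat{\Delta}$) and Lemma~\ref{lem:low-content-size1} (which is faster when $\hat{\Delta}$ is only polylogarithmically larger than $\mathcal{F}$). No new algorithm is needed; the lemma is a packaging statement.

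For the first assertion, I would invoke Lemma~\ref{lem:clustersize-approx1} with the trivial bound $\hat{\Delta} = n$, which is known to every node since they have a polynomial estimate of $n$. This immediately yields a constant-approximation of each cluster size in $O(\log \hat{\Delta} \cdot \log n) = O(\log^2 n)$ rounds with high probability.

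For the second assertion, assume a $\log^{\hat c} n$-approximation $\hat{\Delta}$ of $\Delta$ is given, with $\hat c = O(1)$. Split into two cases based on the threshold $\hat{\Delta} \le \mathcal{F}\log^{\hat c + 2} n$. In the \emph{low-contention} case where this inequality holds, invoke Lemma~\ref{lem:low-content-size1} with the constant $c := \hat c + 2$: the preconditions are met, and the running time is $O(\log n \cdot \log\log n)$, which is absorbed into the claimed bound. In the \emph{high-contention} case where $\hat{\Delta} > \mathcal{F}\log^{\hat c + 2} n$, the approximation guarantee gives $\Delta \ge \hat{\Delta}/\log^{\hat c} n > \mathcal{F}\log^2 n$, so $\log^2 n = O(\Delta/\mathcal{F})$. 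Invoking Lemma~\ref{lem:clustersize-approx1} with the given $\hat{\Delta}$ then costs $O(\log \hat{\Delta} \cdot \log n) = O(\log^2 n) = O(\Delta/\mathcal{F})$ rounds, which again fits the target bound. Combining the two cases yields $O(\Delta/\mathcal{F} + \log n \cdot \log\log n)$ rounds overall.

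There is no real obstacle here: the only subtle point is the choice of the case-split threshold. Picking it as $\mathcal{F}\log^{\hat c + 2} n$ (rather than, say, $\mathcal{F}\log^{\hat c} n$) leaves two spare logarithmic factors, which is exactly what is needed so that the $\log^2 n$ cost of the generic CSA procedure is dominated by $\Delta/\mathcal{F}$ in the high-contention branch while Lemma~\ref{lem:low-content-size1} remains applicable in the low-contention branch. The high-probability guarantee is inherited directly from the two component lemmas via a union bound over the single invocation used in each case.
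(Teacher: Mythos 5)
Your proposal is correct and matches the paper's own argument essentially verbatim: the paper also proves the first claim by running the generic CSA procedure with $\hat{\Delta}=n$, and the second by the same case split at the threshold $\hat{\Delta}\le \mathcal{F}\log^{\hat c+2}n$, invoking Lemma~\ref{lem:low-content-size1} in the low-contention case and observing $O(\log^2 n)=O(\Delta/\mathcal{F})$ for Lemma~\ref{lem:clustersize-approx1} otherwise. No differences worth noting.
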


For simplification, we shall simply use $|C_v|$ to denote the size estimate $|\hat{C_v}|$ derived for cluster $C_v$. Since it is a constant approximation, it will not affect the asymptotic running times.

\subsubsection{Reporter Election and Aggregation Tree Formation}

In this stage, reporters are elected in each cluster simultaneously by running the ruling set algorithm of Sec.~\ref{sec:mis}.
For a cluster $C_v$, a reporter is elected on each of the channels $F_1, F_2, \ldots, F_{f_v}$.
%
To argue correctness, it suffices in light of Lemma \ref{le:MISresult} to show that every channel gets assigned some node.
The expected number of nodes in $C_v$ choosing a channel is $|C_v|/f_v$.
Chernoff bound (\ref{eq:Chernoff3}) and the union bound then imply the desired result with high probability.

\begin{lemma}\label{le:hies}
In each cluster $C_v$, with high probability, exactly one reporter is elected on each of the $f_v$ channels in $O(\log n)$ rounds.
\end{lemma}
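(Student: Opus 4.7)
}
The plan is to verify two complementary events for each cluster $C_v$ and each of its channels $F_1,\dots,F_{f_v}$: (i) \emph{at least one} node of $C_v$ chooses the channel, and (ii) the reporter election procedure (the ruling-set algorithm of Sec.~\ref{sec:mis}) outputs \emph{at most one} node on that channel within $C_v$. Together with the $O(\log n)$ round guarantee of Lemma~\ref{le:MISresult}, these yield the claim, after a union bound over all clusters and channels.

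For part (i), I would lean entirely on the definition $f_v=\min\{\lceil|C_v|/(c_1\log n)\rceil,\mathcal{F}\}$ with $c_1=24$. In either case this gives $|C_v|/f_v \ge c_1\log n/2$ for $|C_v|$ larger than a suitable constant (and clusters of constant size are trivial). Since each dominatee picks a channel uniformly at random from $\{F_1,\dots,F_{f_v}\}$, the number $Y_{v,i}$ of nodes of $C_v$ landing on channel $F_i$ is a sum of independent Bernoulli variables with mean $\mu_{v,i}=|C_v|/f_v = \Omega(\log n)$. Applying Chernoff bound~(\ref{eq:Chernoff3}) with $c_1$ chosen large enough yields $\Pr[Y_{v,i} \le \mu_{v,i}/2] \le n^{-4}$, and then a union bound over the at most $n$ clusters and $\mathcal{F}\le n$ channels gives that \emph{every} channel of every cluster receives at least $(c_1/4)\log n$ nodes, with high probability.

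For part (ii), I would invoke the algorithm of Sec.~\ref{sec:mis} with distance parameter $r$ chosen so that $r\ge 2r_c$ (this is allowed, since $r$ only needs to satisfy $r\le R_T/2$, and $2r_c \le \epsilon R_T/2$ by definition of $r_c$). Since every dominatee in $C_v$ is within $r_c$ of its dominator, any two nodes in $C_v$ are within $2r_c\le r$ of each other, so the $r$-independence property of the output ruling set forces it to contain at most one node of $C_v$ per channel. Combined with part (i) and Lemma~\ref{le:MISresult}, exactly one node is chosen per channel per cluster, with high probability, using $O(\log n)$ rounds. The validity of running the ruling-set algorithm in parallel across clusters and channels is guaranteed by the TDMA convention of Sec.~\ref{sec:tdma} together with Lemma~\ref{le:detertra}: clusters of the same color of distance at least $R_{\epsilon/2}$ on the same channel do not interfere with each other, so each instance of the ruling-set algorithm behaves as if executed in isolation.

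The main subtlety I expect in writing up this proof is the transition from the stated ruling-set guarantee (which is proved for the whole network) to its per-cluster-per-channel instantiation. One must check that the density hypothesis used by the ruling-set analysis (via Lemma~\ref{le:protran}) still holds when restricted to the nodes of $C_v$ that selected a given channel: but because all such nodes lie inside a single $r_c$-ball and their sum of transmission probabilities is $O(1)$ (indeed at most $1/(2\mu)\cdot |C_v|$ times a scaling to keep the contention constant, as in Sec.~\ref{sec:mis}), the condition of Lemma~\ref{le:protran} is met. Once this bookkeeping is done, the combination of Chernoff concentration on channel assignments and the $r$-independence of the ruling-set output yields Lemma~\ref{le:hies}.
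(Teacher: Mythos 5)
Your proof follows essentially the same route as the paper, which argues the lemma in two sentences preceding its statement: a Chernoff bound (via the choice $f_v=\min\{\lceil|C_v|/(c_1\log n)\rceil,\mathcal{F}\}$) guarantees every channel receives at least one node, and Lemma~\ref{le:MISresult} then yields exactly one reporter per channel in $O(\log n)$ rounds. Your additional bookkeeping --- that $r$-independence with $r\ge 2r_c$ forces at most one ruling-set node per cluster per channel, and that the TDMA scheme isolates same-colored clusters --- is correct and merely makes explicit what the paper leaves implicit.
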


We refer to dominatees that are not reporters as \emph{followers}. Thus, $C_v$ is partitioned into one dominator, $f_v$ reporters, and $|C_v|-f_v-1$ followers. In subsequence, we use $X_v=\{u_1,\ldots,u_{f_{v}}\}$ and $Y_v$ to denote the sets of reporters and followers in $C_v$, respectively, where $u_i$ is the reporter elected on channel $F_i$.
%
%
Let $u_0 = v$ refer to the dominator.
We define a complete binary tree rooted at the dominator, with the reporters ordered in level-order, like a binary heap.
Thus, $u_{\lfloor k/2\rfloor}$ is the parent of $u_k$ in the tree, for $k=1, \ldots, f_v$.



Once the reporters are elected, the aggregation tree is then ready to use.
\begin{lemma}\label{le:treecon}
A complete binary tree of $\lfloor \log(f_v+1)\rfloor$ levels is constructed on the reporters for each cluster $C_v$. Operating on well-separated clusters, it can perform a convergecast operation 
deterministically in time $2 \lfloor \log(f_v+1)\rfloor$.
\end{lemma}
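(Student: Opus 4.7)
The plan is to verify the two claims of the lemma in turn: first that the heap-style parent assignment $u_{\lfloor k/2 \rfloor}$ really yields a complete binary tree of the stated depth, and second that a single bottom-up sweep of such a tree performs convergecast in the stated number of rounds. The first part is a purely structural observation: by Lemma~\ref{le:hies} each cluster $C_v$ has exactly one reporter on each of its $f_v$ channels, so together with the dominator $u_0$ there are $f_v+1$ nodes in the tree, and the heap-indexing $u_{\lfloor k/2 \rfloor}$ $\to$ $u_k$ produces the canonical complete binary tree on these nodes, whose depth is $\lfloor \log(f_v+1)\rfloor$.

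For the convergecast part, I would process the levels of the tree from the deepest up to the root, spending two rounds per level. At level $\ell$, in the first of its two rounds every parent $u_m$ at level $\ell-1$ listens on channel $F_{2m}$ while its left child $u_{2m}$ transmits (the aggregate of the subtree it has accumulated so far) on channel $F_{2m}$; in the second round the same is done for the right child on channel $F_{2m+1}$. Missing children are simply skipped. Summing over all $\lfloor \log(f_v+1)\rfloor$ levels gives the claimed $2 \lfloor \log(f_v+1)\rfloor$ rounds.

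The step that needs the most care is arguing that these transmissions succeed deterministically in the presence of other clusters. Here I would appeal to Lemma~\ref{le:tdma}: the clusters are well-separated and we execute only clusters of one color per TDMA slot; within each such cluster, during any single round of the convergecast the different reporters that transmit all do so on distinct channels (each reporter sits on its own $F_k$ and is the only node assigned to that channel in its cluster), so on every channel at most one node per cluster transmits. Lemma~\ref{le:tdma} then guarantees that each transmission on channel $F_k$ is received by every cluster node listening on $F_k$, in particular by the intended parent. No probabilistic argument is required beyond that already used to elect the reporters.

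Finally, I would note that the constant-factor overhead $\phi$ absorbed into the TDMA scheme (Sec.~\ref{sec:tdma}) keeps the asymptotic time bound at $O(\log f_v) = O(\log \mathcal{F})$, consistent with the overall $O(\log^2 n)$ and $O(\Delta/\mathcal{F} + \log n \log\log n)$ statements of Theorem~\ref{th:tree}, so nothing in the broader accounting changes.
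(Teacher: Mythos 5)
Your proof is correct and matches the paper's (implicit) argument: the paper states this lemma without proof, treating it as immediate from the heap-style construction plus Lemma~\ref{le:tdma}, exactly as you argue. The only cosmetic difference is that the paper has each child $u_k$ transmit on its \emph{parent's} channel $F_{\lfloor k/2\rfloor}$ (odd and even children in separate slots of the same round) rather than on its own channel with the parent retuning, but either convention gives at most one transmitter per channel per cluster, so Lemma~\ref{le:tdma} applies identically and the $2\lfloor\log(f_v+1)\rfloor$ count is the same.
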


\section{Data Aggregation}\label{sec:dataagg}
\label{sec:aggreg}

The data aggregation algorithm consists of three procedures executed in parallel:
The intra-cluster aggregation involves two processes: 
a) collecting the data from followers to the reporters, b) aggregating the data of dominatees using the reporter tree to the dominator,
and finally aggregating the data among the dominators.
The first two procedures can together be referred to as \emph{intra-cluster aggregation}, while the last one is \emph{inter-cluster} aggregation.
In each round there are five slots for these three procedures: a pair of send/acknowledge slots for each of the first two, and
a single slot for the last one.



\mypara{Aggregation from Followers to Reporters}
The execution of this process is divided into phases, each with $\Gamma + 1$ rounds, where $\Gamma := \gamma_2\ln n$ and
$\gamma_2$ is to be determined. 
For a cluster $C_v$, the first $f_v$ channels are used for transmissions. 
The first channel is special in that the dominator listens on it to estimate the contention.
In each phase, the operations of nodes are as follows:

(i) A follower $u\in Y_v$, in each of the first $\Gamma$ rounds, selects one of the first $f_v$ channels uniformly at random, transmits on the selected channel with a specified probability $p_u$ in the first slot, 
and listens in the second slot for an acknowledgement (\emph{ack}) from its reporter. Initially, $p_u$ is set as $p_u=\lambda f_v/|C_v|$ with $\lambda=1/2$. If $u$ receives an ack, it halts.

In the last round, $u$ listens on the first channel. After each phase, if $u$ receives a backoff message from its dominator in the last round, it keeps $p_u$ unchanged, and doubles $p_u$ otherwise. 
 
(ii) A reporter $w\in X_v$ operates on the channel where it is elected. In each of the first $\Gamma$ rounds, $w$ listens in the first slot. If it receives a message from a follower in its cluster, it returns an acknowledgement in the second slot. In the last round, $w$ does nothing.


(iii) The dominator $v$ listens on the first channel during the first $\Gamma$ rounds.
In the last round it transmits a \emph{backoff message} if and only if it heard at least $\Omega := \omega_2\ln n$ messages from followers during the preceding rounds.

In the above algorithm, we set the constant parameters as follows: $\omega_2=96/\kappa_1$ and $\gamma_2=8\omega_2/\kappa_1$, where $\kappa_1\leq 1$ is a constant that will be given in Lemma~\ref{le:dsuc} of the analysis.



\mypara{Aggregation on Reporter Tree}
%
The execution of this process is divided into phases, where each phase contains $\lfloor\log(\mathcal{F}+1)\rfloor-1$ rounds. 


For a cluster $C_v$, the first $f_v$ channels are used for transmissions. As before, use $X_v=\{u_1,\ldots,u_{f_{v}}\}$ to denote the set of reporters and $T_v$ to denote the reporter tree. We enumerate the levels of $T_v$ 
from bottom, i.e., with the leaves at level 1.

In the $s$-th round of a phase, nodes at level $s$ and $s+1$ of $T_v$ execute the algorithm to aggregate from level $s$ to level $s+1$, while other nodes keep silent. 
Each reporter $u_k$ at level $s$  operates on the same channel as its parent, i.e., on the channel $\lfloor k/2\rfloor$. 
If $k$ is odd (even), then $u_k$ transmits its data to its parent $u_{\lfloor k/2 \rfloor}$ in the third (fourth) slot of round $s$, respectively.

\mypara{Inter-cluster Aggregation}
In this procedure, we use a known approach for disseminating data on a constant-density backbone network (e.g., see
Section 5.2 in~\cite{BHM13P}).
The basic idea of the algorithm is to use flooding (with continuous constant-probability transmissions)
to produce an aggregation/broadcast tree, with which data can be aggregated and then broadcast to all nodes in $O(D + \log n)$ rounds with high probability.

\subsection{Analysis}

The main effort of the analysis is on the first procedure, aggregating from the followers to reporters. We address the other two in the final theorem.

To bound the time spent on aggregating from the followers we show that we maintain linear throughput while the contention is high enough. 
Namely, while the contention is above a fixed constant threshold, each reporter makes progress with constant probability,
where progress means aggregating a message from one more follower.
To this end, we show that contention always remains bounded from above, and whenever it becomes low, the transmission probabilities double.
When the contention dips below the threshold, we need only doubly logarithmic number of phases to increase the 
transmission probabilities to constant and aggregate the remaining followers.


The sum of transmission probabilities of followers in a cluster is referred to as the \emph{contention} in the cluster,
and denoted by $P_c(v)=\sum_{u\in Y_v}p_u$.

\begin{definition}[\textbf{Bounded Contention}]\label{pro:transom}
\emph{Bounded Contention} is achieved in a given round if the contention in each cluster is at most half the number of channels alotted, i.e., $P_c(v) \le \lambda f_v = \frac{1}{2}f_v$, for each cluster $C_v$.
\end{definition}

Even if the contention in each cluster is bounded, we cannot directly use the result in Lemma~\ref{le:protran}, as the contention on a particular channel may not be constant bounded. But because followers select the operating channel uniformly at random, it can be seen that the expected contention on each channel can be bounded by $\lambda$. This is enough to  use the interference bounding technique used for proving Lemma~\ref{le:protran}, and we can get the following Lemma~\ref{le:dsuc}. 

We say a follower \emph{succeeds} (to transmit) if its message is properly received by a reporter on a channel. The proof
 detail is omitted because it is very similar to the standard argument given in \cite{GMW08} (Lemma 4.1 and Lemma 4.2).

\begin{lemma}\label{le:dsuc}
Assuming Bounded Contention holds, whenever a follower transmits, it succeeds with probability at least $\kappa_1$, 
for a universal constant $\kappa_1 > 0$.
\end{lemma}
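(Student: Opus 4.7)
The plan is to mirror the proof of Lemma \ref{le:protran}, modified to account for the uniform random channel selection made by each follower. Fix a follower $u \in Y_v$ that picks channel $F_i$ and transmits in some round, and let $r$ be the elected reporter of $C_v$ on $F_i$. Because both $u$ and $r$ lie in the same cluster, $d(u,r) \le 2r_c$, so the received signal strength at $r$ is at least $\Omega(P/r_c^\alpha)$. It therefore suffices to show that, with probability bounded away from zero, the interference at $r$ on $F_i$ from other simultaneous transmitters is at most a fixed constant times $P/r_c^\alpha$, which by our choice of $r_c$ is enough to meet the SINR threshold $\beta$.

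The first step is to convert Bounded Contention into an expected per-channel contention bound that matches the hypothesis of Lemma \ref{le:protran}. In any cluster $C_{v'}$ of the currently TDMA-active color, each follower $w \in Y_{v'}$ selects $F_i$ with probability $1/f_{v'}$ and then transmits with probability $p_w$, so the expected sum of transmission probabilities on $F_i$ inside $C_{v'}$ is $\sum_{w \in Y_{v'}} p_w / f_{v'} \le P_c(v')/f_{v'} \le \lambda = 1/2$. This is exactly the per-ball contention bound used in Lemma \ref{le:protran}, only now it holds in expectation over the channel choice rather than deterministically.

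Next I would split the potential interferers at $r$ into three groups and control each. (i) Remaining followers in $C_v$ that happen to pick $F_i$: Markov's inequality on expected contention $\le \lambda$ gives a constant-probability event that no other intra-cluster follower transmits on $F_i$. (ii) Followers residing in same-color clusters whose dominators lie within some constant multiple $C \cdot R_{\epsilon/2}$ of $r$: the coloring separation of $R_{\epsilon/2}$ and an area/packing argument bound the number of such clusters by a constant, so their total expected per-channel contention is $O(1)$, and a further Markov step gives a constant-probability event that none of them has a transmitter on $F_i$. (iii) Followers in faraway same-color clusters: reusing the concentric-annulus calculation of Lemma \ref{le:detertra} applied to dominators (pairwise separated by $\ge R_{\epsilon/2}$), each representing a cluster of expected per-channel contention $\le \lambda$, the expected interference at $r$ telescopes into a convergent sum controlled by $\alpha > 2$; Markov once more bounds the actual interference by a small constant times $N$.

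The only genuine obstacle is the nearby-cluster zone (ii): the annulus sum in (iii) converges only when summed from a distance comparable to $R_T$, so clusters whose dominators sit at small constant multiples of $R_T$ are not absorbed by the expectation tail and must be killed by a direct Markov argument. Intersecting the three constant-probability events via a union bound over their complements, the interference at $r$ is at most $\tfrac{1}{\beta}\cdot P/(2r_c)^\alpha - N$ with probability at least some universal constant $\kappa_1 > 0$, at which point the SINR condition at $r$ is satisfied and $u$'s transmission succeeds. This is exactly the channel-aware specialization of the Gfeller--Moscibroda--Wattenhofer style argument (\cite{GMW08}, Lemmas 4.1--4.2) that the authors defer to, with the per-channel expected contention $\le \lambda$ in each active cluster playing the role of the ball contention bound $\psi$ in Lemma \ref{le:protran}.
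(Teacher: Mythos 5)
Your proposal coincides with the paper's (omitted) proof: the paper's entire argument is exactly your first step — that Bounded Contention plus the uniform random channel choice turns the per-cluster bound $P_c(v')\le\lambda f_{v'}$ into an effective per-channel contention of at most $\lambda$ per cluster (hence $O(1)$ per $R_T$-ball by the color separation) — followed by the deferred near/far interference decomposition of \cite{GMW08}, which is what your groups (i)--(iii) spell out. The one technical slip is in how you combine the three good events: for the nearby clusters the expected number of competing transmitters is a constant that may exceed $1$, so Markov gives nothing and a union bound over complements can sum to more than $1$; instead you should use that the three groups involve disjoint sets of nodes making independent choices, bound the no-transmission probability by $\prod_w(1-q_w)\ge e^{-2\sum_w q_w}=e^{-O(1)}$ for the nearby groups, and multiply the three probabilities.
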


The TDMA scheme ensures that when a follower succeeds, it receives the ack message in the subsequent slot, as argued in Lemma \ref{le:tdma}. 

Using Lemma \ref{le:dsuc}, we can argue the Bounded Contention property.
\begin{lemma}\label{le:dataaggpro}
Bounded Contention holds in every round, with probability  $1 - n^{-1}$.
\end{lemma}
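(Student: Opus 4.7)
The plan is to proceed by induction on phases, using the Chernoff bound to show that whenever the contention is appreciable, the dominator receives enough messages to broadcast a backoff and prevent further doubling. For the base case, note that initially $p_u = \lambda f_v/|C_v|$ for every follower, so $P_c(v) = |Y_v| \cdot \lambda f_v/|C_v| \leq \lambda f_v$. Within a single phase, $P_c(v)$ only weakly decreases as followers successfully transmit and halt, so the only way Bounded Contention can be violated is at the transition between phases, when the followers in some cluster double their transmission probabilities.

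For the inductive step, assume Bounded Contention holds through phase $j$. Split into two cases on the contention $P_c(v)$ at the end of the phase. If $P_c(v) \leq \lambda f_v/2$, then even after doubling, $P_c(v) \leq \lambda f_v$, preserving the invariant. If instead $P_c(v) > \lambda f_v/2$, I claim with probability $1 - n^{-3}$ the dominator broadcasts a backoff, so no doubling occurs. Under the inductive hypothesis, Lemma~\ref{le:dsuc} applies to every transmission, so each follower $u$ delivers a message to the dominator on channel $1$ in a given round with probability at least $(p_u/f_v)\kappa_1$ (probability $p_u$ of transmitting, probability $1/f_v$ of choosing channel $1$, and probability $\kappa_1$ of success). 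The channel choices and coin flips are fresh across rounds, so summing over followers and over the $\Gamma = \gamma_2 \ln n$ listening rounds of phase $j$ gives expected total receptions at least
\[
\mu \;\ge\; \Gamma \cdot \frac{\kappa_1 P_c(v)}{f_v} \;\ge\; \gamma_2 \ln n \cdot \frac{\kappa_1 \lambda}{2} \;=\; 2\omega_2 \ln n \;=\; 2\Omega,
\]
using $\gamma_2 = 8\omega_2/\kappa_1$ and $\lambda = 1/2$. Chernoff bound~\eqref{eq:Chernoff3} then bounds the probability of receiving fewer than $\Omega \leq \mu/2$ messages by $e^{-\mu/8} \leq n^{-3}$ for $\omega_2 = 96/\kappa_1$. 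Finally, since the algorithm runs for at most polynomially many phases on at most $n$ clusters, a union bound yields total failure probability at most $n^{-1}$.

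The main subtlety is the apparent circularity: Lemma~\ref{le:dsuc} needs Bounded Contention, which is what we are trying to establish. The standard resolution is to make the induction precisely about the event $\mathcal{G}_j = \{\text{Bounded Contention holds through phase } j\}$; under $\mathcal{G}_j$, the success-probability bound $\kappa_1$ is valid during phase $j+1$, and the $n^{-3}$ Chernoff bound above then upgrades $\mathcal{G}_j$ to $\mathcal{G}_{j+1}$. One also needs to verify that ``receptions in distinct rounds'' are genuinely independent (which they are, since each round draws fresh channel choices and transmission coins) so that the Chernoff bound applies without modification — this is the only place where care is needed beyond plugging into the setup already established.
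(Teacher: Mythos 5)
Your proof is correct and is essentially the paper's argument: the paper phrases it as a contradiction on the ``first violating cluster and phase,'' which is exactly your induction on the event $\mathcal{G}_j$ in disguise, and it uses the same case split ($P_c(v)$ above or below $\lambda f_v/2$ after doubling), the same application of Lemma~\ref{le:dsuc} conditioned on no prior violation, the same expected-reception computation yielding $2\Omega$, and the same Chernoff-plus-union-bound finish. No substantive differences.
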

\begin{proof}
  We prove the Lemma by contradiction. Assume that cluster $C_v$ is the first one to violate the Bounded Contention
  property, and that the violating phase is $j$. The initial transmission probability implies that $j>1$. 
  We focus on phase $j-1$.
  In this phase, by assumption, we have $P_c(y)\leq \lambda f_y$ for each cluster $C_y$,
  and since the transmission probability of followers is at most doubled between phases, $P_c(v)\in
  (\lambda f_v/2,\lambda f_v]$.
  The expected number of transmissions by followers in $C_v$ during phase $j-1$ is then at least $\lambda f_v/2\cdot \Gamma$. 
 Since Bounded Contention holds in phase $j-1$, each transmission is successful with some constant probability $\kappa_1$.
  Hence, there are $\lambda/2\cdot \Gamma \kappa_1 = \frac{8}{4} \omega_2 \ln n = 2\Omega$ successful transmissions on each channel, in expectation. 
  Using Chernoff bound (\ref{eq:Chernoff3}), the dominator $v$ receives at least $\Omega$ transmissions with probability $1-n^{-3}$ (as $\omega_2 \ge 12$). 
Then, by Lemma \ref{le:tdma}, $v$ sends a backoff message to all the followers, who keep their transmission probability unchanged after this phase $j-1$.
As a result, the $\lambda f_v$ bound will not be broken in phase $j$, which contradicts with our assumption. So $C_v$ cannot be the first violating cluster with probability $1-n^{-3}$. The Lemma is then proved by the union bound.
\end{proof}


A phase is \emph{increasing} if the transmission probability of the reporters in $C_v$ is doubled after the
phase, i.e., the dominator $v$ receives less than $\omega_2\ln n$ messages, and otherwise it is \emph{unchanging}. 
Let $N_v^j$ denote the total number of transmissions by followers in $C_v$ during phase $j$. A transmission by a
follower $u\in Y_v$ is \emph{successful} if $u$ succeeds in transmitting the data to a reporter.

\begin{lemma}\label{le:unchanging}
Consider a cluster $C_v$.
If a phase $j$ is unchanging, then, with probability at least $1-n^{-3}$,
there are at least $\Omega/4 = \frac{\omega_2}{4} f_v\ln n$ transmissions in the phase, 
of which at least $12 f_v\ln n$ are successful.
\end{lemma}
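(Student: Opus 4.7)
The plan is to argue in two stages: first lower-bound the total number of transmissions $N_v^j$, and then convert this into a lower bound on the number of successful transmissions via Lemma~\ref{le:dsuc}.

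\textbf{Stage 1: from receptions to transmissions.} Let $T^{(1)}$ denote the number of transmissions that land on channel $1$ during phase $j$. Being unchanging means that $v$ received at least $\omega_2\ln n$ messages on the first channel, each of which corresponds to a distinct successful (hence actual) transmission, so $T^{(1)}\ge\omega_2\ln n$. The key randomization to exploit is that each transmitting follower picks its channel uniformly at random from the $f_v$ channels, independently of all other randomness; conditioning on $N_v^j$ and on which rounds the transmissions occurred, $T^{(1)}\sim\mathrm{Bin}(N_v^j,1/f_v)$. I argue the contrapositive: for any realization with $N_v^j=k<(\omega_2/4)f_v\ln n$, the standard tail bound
\[
\Pr\!\bigl[\mathrm{Bin}(k,1/f_v)\ge \omega_2\ln n\bigr]\le \left(\frac{ek}{f_v\,\omega_2\ln n}\right)^{\omega_2\ln n}\le (e/4)^{\omega_2\ln n}
\]
applies. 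Averaging over the distribution of $N_v^j$ and using $\omega_2=96/\kappa_1\ge 96$ bounds the joint probability of $\{T^{(1)}\ge\omega_2\ln n\}$ and $\{N_v^j<(\omega_2/4)f_v\ln n\}$ by $n^{-c}$ for a constant $c\gg 3$. The small-$f_v$ regime, where $(\omega_2/4)f_v\ln n\le\omega_2\ln n$, is trivial since $N_v^j\ge T^{(1)}\ge\omega_2\ln n$ already meets the target.

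\textbf{Stage 2: from transmissions to successful transmissions.} Lemma~\ref{le:dataaggpro} gives that Bounded Contention holds throughout phase $j$, so Lemma~\ref{le:dsuc} guarantees that each transmission succeeds with probability at least $\kappa_1$. Using the same per-transmission argument as in the proof of Lemma~\ref{le:dataaggpro}, the expected number of successful transmissions is at least $\kappa_1 N_v^j\ge(\kappa_1\omega_2/4)f_v\ln n=24 f_v\ln n$, where I have used $\omega_2=96/\kappa_1$. A Chernoff lower-tail bound (\ref{eq:Chernoff3}) then yields at least $12 f_v\ln n$ successes with probability $\ge 1-n^{-3}$. A union bound over the Stage~1 and Stage~2 failure events completes the proof.

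The main obstacle is the coupled randomness in Stage~1: $T^{(1)}$ and $N_v^j$ depend on the same underlying coins, so one must order the exposure carefully so that, conditional on the set of transmissions, the channel labels are i.i.d.\ uniform and $T^{(1)}$ has a clean binomial distribution. Once that is in place, the Chernoff tail is comfortably strong because the multiplicative gap between the threshold $\omega_2\ln n$ and the hypothesized mean $(\omega_2/4)\ln n$ is a factor of $4$, which together with $\omega_2\ge 96$ gives far better than the required $n^{-3}$ decay regardless of the size of $f_v$.
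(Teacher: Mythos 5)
Your proposal is correct and follows essentially the same route as the paper's proof: a contrapositive argument showing that fewer than $\frac{\omega_2}{4}f_v\ln n$ total transmissions would make the count on channel~1 concentrate below the $\omega_2\ln n$ threshold (hence the phase would be increasing), followed by Lemma~\ref{le:dsuc} and a Chernoff lower tail to convert transmissions into successes. The only differences are cosmetic — you use the $(e\mu/a)^a$ form of the upper tail where the paper uses bound~(\ref{eq:Chernoff1}), and you are somewhat more explicit about the exposure order needed to treat $T^{(1)}$ as binomial given the transmission set.
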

\begin{proof}
Suppose there are fewer than $\Omega/4$ transmissions in phase $j$.
Then, since channels are chosen with equal probability, the expected number of transmissions in the first channel is at most $\frac{\omega_2}{4}\ln n$.
Then, by Chernoff bound (\ref{eq:Chernoff1}) (since $\omega_2\geq 36$), at most $\frac{\omega_2}{2}\ln n$ transmissions are made in the channel,
with probability $1-n^{-3}$, which implies that the phase is increasing.
Thus, the first part of the lemma holds: if a phase if unchanging, then at least $\Omega/4$ transmissions occur.
By Lemma~\ref{le:dsuc}, the expected number of successful transmissions is then at least $\Omega/4 \cdot \kappa_1 = (\omega_2 f_v\ln n/4)\cdot \kappa_1 = 24 f_v\ln n$. 
Using Chernoff bound (\ref{eq:Chernoff3}), the number of successful transmissions is at least $12 f_v \ln n$, with probability $1-O(n^{-3})$.
\end{proof}

Based on above analysis, we can now get the result for the first procedure.

\begin{lemma}\label{le:intraco}
In each cluster, the data of all followers can be aggregated to the reporters in $O(\frac{\Delta}{\mathcal{F}}+\log n\log\log n)$ rounds, with probability $1-O(n^{-1})$.
\end{lemma}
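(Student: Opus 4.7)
The plan is to count separately the two phase types defined by the algorithm (unchanging and increasing), multiply by the phase length $\Gamma+1=O(\log n)$, and show that the sum matches the target bound. The starting point is Lemma~\ref{le:dataaggpro}, which gives the Bounded Contention invariant throughout the procedure with probability $1-n^{-1}$; under this event, Lemmas~\ref{le:dsuc} and~\ref{le:unchanging} apply in every phase.

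For the number $N_u$ of unchanging phases, Lemma~\ref{le:unchanging} guarantees $\Omega(f_v\log n)$ successful follower transmissions per phase w.h.p., so $N_u\cdot\Omega(f_v\log n)\le|C_v|\le\Delta$, i.e.\ $N_u=O(|C_v|/(f_v\log n))$. I then case-split on $f_v=\min\{\lceil|C_v|/(c_1\log n)\rceil,\mathcal{F}\}$: if $f_v<\mathcal{F}$ then $|C_v|\le c_1 f_v\log n$ and $N_u=O(1)$; if $f_v=\mathcal{F}$ then $N_u=O(\Delta/(\mathcal{F}\log n))$. Multiplied by $\Gamma$, the unchanging phases contribute $O(\log n+\Delta/\mathcal{F})$ rounds. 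For the number $N_i$ of increasing phases, the per-follower probability $p_u$ starts at $\lambda f_v/|C_v|$ and doubles each such phase, but is effectively capped at a constant: once $p_u=\Theta(1)$ and more than $f_v$ followers remain, the contention sits at $\Theta(f_v)$ and Lemma~\ref{le:unchanging} forces the next phase to be unchanging. Thus $N_i=O(\log(|C_v|/f_v))=O(\log\log n+\log(\Delta/\mathcal{F}))$. Multiplying by $\Gamma$ and using the inequality $\log(\Delta/\mathcal{F})\cdot\log n=O(\Delta/\mathcal{F}+\log n\log\log n)$ (a short case split on whether $\Delta/\mathcal{F}\le\log^2 n$) yields $O(\Delta/\mathcal{F}+\log n\log\log n)$ for increasing phases.

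The final step is a cleanup argument: once $p_u$ has reached $\Theta(1)$, at most $O(f_v)$ followers may remain; each transmits with constant probability per round and succeeds with probability $\kappa_1$ by Lemma~\ref{le:dsuc}, so $O(\log n)$ additional rounds aggregate all stragglers w.h.p. A union bound over the $O(\Delta/\mathcal{F}+\log n\log\log n)$ phases collects the failure events from Lemmas~\ref{le:dataaggpro}, \ref{le:unchanging}, and the cleanup into the required $O(n^{-1})$ bound. The main obstacle I anticipate is making rigorous the informal claim that $N_u$ and $N_i$ simply add: one must show via a short potential-style argument on the pair $(p_u,|Y_v|)$ that once the transmission probability has doubled into the productive regime it stays there, so that increasing and unchanging phases do not interleave pathologically.
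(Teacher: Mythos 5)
Your proposal is correct and follows essentially the same route as the paper: bound the unchanging phases via the $\Omega(f_v\log n)$ successful transmissions per phase from Lemma~\ref{le:unchanging}, bound the increasing phases by the doubling of $p_u$ from $\lambda f_v/|C_v|$ up to a constant (after which Lemma~\ref{le:dsuc} finishes the stragglers within one $\Gamma$-round phase), and combine via the same case split on $\Delta/\mathcal{F}$ versus $\log^2 n$. The interleaving concern you flag at the end is a non-issue: every phase is by definition either increasing or unchanging, and the two counts are independent upper bounds (one from progress per unchanging phase, one from the capped doubling), so they simply add.
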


\begin{proof}
Consider a cluster $C_v$. 
There are at most $O(|C_v|/(12f_v \ln n))= O(1 + \Delta/(\mathcal{F}\log n))$ unchanging phases, 
  by Lemma \ref{le:unchanging}, with probability $1-n^{-3}$.
Also, when the transmission probability of a follower is increased to a constant $\lambda/2$ in a phase, it can successfully send its data to a reporter with probability $\lambda\kappa_1/2$ in each round of the phase by Lemma~\ref{le:dsuc}, and the $\gamma_2\ln n \ge \frac{3\ln n}{\lambda \kappa_1/2}$ rounds in the phase ensure successful transmission with high probability. Hence, there are at most $O(\log(|C_v|/f_v)) = O(\log(\Delta/\mathcal{F}) + \log\log n)$ increasing phases for each cluster, given the initial transmission probability of followers.
Combined, the number of phases is $O(\Delta/(\mathcal{F}\log n) + \log(\Delta/\mathcal{F}) + \log\log n) = O(\Delta/(\mathcal{F}\log n) + \log\log n)$, 
and thus the number of rounds is $O(\Delta/\mathcal{F} + \log n \log\log n)$, with probability $1-n^{-3}$.
The lemma then follows from the union bound over the clusters.
\end{proof}

\begin{theorem}\label{th:da}
Data aggregation can be accomplished in $O(D+\frac{\Delta}{\mathcal{F}}+\log n\log\log n)$ rounds, with high probability.
\end{theorem}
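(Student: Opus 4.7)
The approach is to run structure construction followed by three aggregation procedures executed in parallel (using the five slots of each round), and to show that the three-way running time bottleneck adds up exactly to $O(D + \Delta/\mathcal{F} + \log n \log\log n)$. First I would invoke the structure construction of Theorem~\ref{th:tree} to obtain the clustering, the cluster coloring, the reporter tree, and the corresponding well-separated TDMA scheme. When a $\log^{O(1)}n$-approximation of $\Delta$ is available (which can itself be extracted in $O(\log^2 n)$ rounds via the naive CSA of Lemma~\ref{lem:clustersize-approx1} and then refined), this construction costs $O(\Delta/\mathcal{F} + \log n \log\log n)$; otherwise the $O(\log^2 n)$ cost is absorbed into the final bound.

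Next I would account for the two intra-cluster procedures. Lemma~\ref{le:intraco} bounds the followers-to-reporters stage (procedure~1) at $O(\Delta/\mathcal{F} + \log n \log\log n)$ rounds whp, conditioned on Bounded Contention, which itself holds with probability $1 - n^{-1}$ by Lemma~\ref{le:dataaggpro}. Procedure~2 runs in parallel as a continuous sequence of bottom-to-top sweeps on the reporter tree; by Lemma~\ref{le:treecon} each sweep costs $\lfloor\log(f_v+1)\rfloor \le \log n$ rounds and, since clusters are well-separated, sweeps in distinct clusters proceed deterministically without mutual interference via Lemma~\ref{le:tdma}. Consequently, once procedure~1 delivers the last follower datum to some reporter, at most one extra sweep, i.e., $O(\log n)$ additional rounds, is needed for every dominator to hold the full cluster aggregate. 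Hence the intra-cluster computation completes in $O(\Delta/\mathcal{F} + \log n \log\log n)$ rounds whp.

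Procedure~3 is the inter-cluster convergecast-plus-broadcast on the constant-density backbone, which by the flooding algorithm of \cite{BHM13P} completes in $O(D + \log n)$ rounds whp once every dominator holds its cluster aggregate. Running it concurrently on its dedicated slot contributes only an additive $O(D + \log n)$ beyond the intra-cluster computation. Summing these contributions and applying a union bound over the polynomially many high-probability events (from Lemmas~\ref{le:clustering}, \ref{le:coloring}, \ref{le:hies}, \ref{le:dataaggpro}, and \ref{le:intraco}) yields the claimed $O(D + \Delta/\mathcal{F} + \log n \log\log n)$ bound whp. The main obstacle I expect is the bookkeeping of the handoffs between the three procedures, specifically ensuring that procedure~3 disseminates the final cluster aggregates rather than stale intermediate values and that the intra-cluster phases terminate consistently in every cluster; this can be handled by inserting an $O(\log n)$ slack after the last intra-cluster activity before procedure~3 commits to its final broadcast, which is absorbed in the asymptotic bound.
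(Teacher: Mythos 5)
Your proposal follows essentially the same route as the paper: it combines Lemma~\ref{le:intraco} for the followers-to-reporters stage, the deterministic $O(\log \mathcal{F})$ convergecast on the reporter tree (via Lemmas~\ref{le:treecon} and~\ref{le:tdma}), and the $O(D+\log n)$ backbone aggregation of \cite{BHM13P}, then union-bounds the failure probabilities. The one inaccuracy is your claim that the $O(\log^2 n)$ structure-construction cost is ``absorbed into the final bound'' --- $\log^2 n$ is not $O(\log n\log\log n)$, and the paper only folds construction into this bound when a $\log^{O(1)}n$-approximation of $\Delta$ is known; the theorem itself (like the paper's proof) counts only the three aggregation procedures.
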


\begin{proof}
We can combine the high probability bounds on each of the three procedures.
By Lemma~\ref{le:intraco}, the aggregation from followers to reporters is achieved 
in $O(\frac{\Delta}{\mathcal{F}}+\log n\log\log n)$ rounds (with probability $1-O(n^{-1})$).
In each cluster $C_v$, the data aggregation from the reporters to the dominator can be accomplished in $O(\log \mathcal{F})$ rounds.
Namely, the construction of the aggregation tree ensures that when a reporter transmits, it is the only one from the same cluster in the same channel, and thus, by Lemma~\ref{le:tdma}, each transmission is successful. The number of rounds to aggregate from reporters to dominator then equals the height of the tree, or $\lfloor\log(\mathcal{F}+1)\rfloor$.
Finally, Theorem 3 in~\cite{BHM13P} achieves (the inter-cluster) aggregation on the dominators in $O(D + \log n)$ rounds.
\end{proof}

\section{Coloring}\label{sec:apps}

Using the aggregation structure, the data of dominatees can be efficiently aggregated to a dominator, 
as shown in Sec.~\ref{sec:dataagg}.
This aggregation structure can be used to solve fundamental problems other than data aggregation, which we illustrate on the node coloring problem.

\mypara{Algorithm} In the constructed aggregation structure, the dominators are colored with cluster colors $1,2,\ldots,\phi$ for some constant $\phi$ such that dominators within distance $R_{\epsilon/2}$ receive different cluster colors (refer to Sec. \ref{sec:tdma}).
We then allocate to each dominator of cluster color $i$ the sequence of colors $k\phi+i: k=0,1,2,\ldots$ to assign to its cluster nodes.

Operating on each cluster $C_v$, the algorithm consists of four procedures: 

\begin{enumerate}
\item The followers execute the data aggregation algorithm of Sec.~\ref{sec:aggreg} to send their IDs to the reporters, by which each reporter will acquire the knowledge of all of its followers. An aggregation tree on all nodes in $C_v$ is then constructed based on the reporter tree by adding links that connect each reporter and the followers following it.

\item Each reporter forwards the number of nodes in its subtree (including the reporters and the followers) to its parent in the reporter tree.

\item The color range (the range of $k$, which determines the set of available colors) of each reporter and its followers is then disseminated to each reporter via the reporter tree. In particular, on the reporter tree, each node $u$ (recall that the root is the dominator) determines the color ranges of its two children based on the color range assigned to $u$ and the number of nodes in the subtree of its children. The distribution of the color range uses an inverse process to the aggregation on the reporter tree given in Sec.~\ref{sec:aggreg}. 

\item For a reporter $u$, let $B_u$ denote the set of colors assigned to it (which can be derived using the color range assigned to $u$). Each reporter $u$ then assigns a different color in $B_u$ to each of its followers and announces the color assignment one by one to its followers.
\end{enumerate}

Because the first procedure uses a randomized algorithm, and the other three procedures are done by letting nodes execute the deterministic TDMA scheme given in Section 5.1, to avoid the interference between the executions of different procedures among clusters,  we run procedures in separate slots of each round. Specifically, in each round, there are four slots for the execution of each of the four procedures.


\mypara{Analysis}
\begin{lemma}\label{le:intracloringtime}
For each cluster $C_v$, after $O(\frac{\Delta}{\mathcal{F}}+\log n\log\log n)$ rounds, each node in $C_v$ will get a different color with high probability. And the total number of colors used is $O(\Delta)$.
\end{lemma}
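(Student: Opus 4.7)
The plan is to bound the time complexity of each of the four procedures separately and verify correctness by tracing how the color ranges are partitioned along the reporter tree.

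First, for the running time, I would handle the four procedures in order. Procedure 1 is simply an invocation of the data aggregation algorithm of Section~\ref{sec:aggreg} (with follower IDs as the data to be aggregated), so by Lemma~\ref{le:intraco} it finishes in $O(\Delta/\mathcal{F}+\log n\log\log n)$ rounds with high probability. Procedures 2 and 3 are deterministic convergecast and broadcast operations on the reporter tree, each of depth $\lfloor\log(f_v+1)\rfloor = O(\log\mathcal{F})$; by Lemma~\ref{le:tdma} and the fact that a reporter and its parent use the same channel (one reporter per channel per cluster), each level takes $O(1)$ rounds, so both procedures cost $O(\log\mathcal{F}) = O(\log n)$ rounds. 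The delicate step is Procedure 4: each reporter announces colors to its followers one by one on its own channel. Since the number of followers assigned to a single reporter is at most $\lceil|C_v|/f_v\rceil$, and the definition $f_v=\min\{\lceil|C_v|/(c_1\log n)\rceil,\mathcal{F}\}$ forces $|C_v|/f_v = O(\log n + |C_v|/\mathcal{F}) = O(\log n + \Delta/\mathcal{F})$, Procedure 4 completes in $O(\Delta/\mathcal{F}+\log n)$ rounds. Summing the four contributions and noting that we run the procedures in separate slots of the same round, the overall bound is $O(\Delta/\mathcal{F}+\log n\log\log n)$.

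For correctness within the cluster, I would argue by an inductive traversal of the reporter tree that the color ranges assigned to distinct reporters are disjoint and each range has enough colors. The dominator starts with a color range of size at least $|C_v|$; when it splits this range between its two children in Procedure 3, it allocates to each child exactly the number of colors equal to the size of that child's subtree (including the followers attached to each reporter, as computed in Procedure 2). A straightforward induction on the level of $T_v$ then shows that every reporter $u$ receives a range $B_u$ of size equal to $1$ plus the number of followers attached to $u$, and that distinct reporters receive disjoint ranges. Hence in Procedure 4 each reporter has enough distinct colors to give one to itself and to each of its followers, and all nodes in $C_v$ receive pairwise distinct colors.

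Finally, to count the colors used: each cluster $C_v$ consumes at most $|C_v|\le \Delta+1$ consecutive slots in the arithmetic progression $\{k\phi+i:k\ge 0\}$ that is allocated to cluster color $i$, so the largest color index used in $C_v$ is $O(\phi\cdot\Delta)=O(\Delta)$, since $\phi=O(1)$ by Section~\ref{sec:tdma}. Taking the maximum over all clusters (all of which draw from arithmetic progressions with the same $O(\Delta)$ upper range) gives a total palette of $O(\Delta)$ colors. The main obstacle is really just the sharp bookkeeping in Procedure 4, namely tying the per-reporter follower count to $|C_v|/f_v$ through the definition of $f_v$; everything else reduces to an application of earlier lemmas and a standard tree induction.
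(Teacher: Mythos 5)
Your proposal is correct and follows essentially the same route as the paper: procedure-by-procedure time bounds (Lemma~\ref{le:intraco} for Procedure 1, tree height and Lemma~\ref{le:tdma} for Procedures 2--3, followers-per-reporter for Procedure 4), followed by disjointness of the color ranges down the reporter tree and the $\phi\cdot O(\Delta)=O(\Delta)$ palette count. The only substantive point the paper makes explicit that you elide is that each follower attaches to exactly \emph{one} reporter (guaranteed by the ack mechanism via Lemma~\ref{le:tdma}), which is what makes the subtree counts and hence the range partition well-defined; also, the per-reporter follower count is a high-probability balls-in-bins bound of $O(|C_v|/f_v+\log n)$ rather than the deterministic $\lceil |C_v|/f_v\rceil$ you assert, though this does not change the final asymptotics.
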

\begin{proof}
As for the time complexity, 
the first procedure takes $O(\Delta/\mathcal{F}+\log n\log\log n)$ rounds, with high probability, 
by Lemma~\ref{le:intraco};
the second and third procedures take $O(\log \mathcal{F})$ rounds, or proportional to the height of the tree (using Lemma~\ref{le:treecon}); 
and finally, the fourth procedure takes as many rounds as a reporter has followers, or $O(|C_v|/f_v + \log n)$,
since these messages are successfully received because the reporters transmit on different channels.
Because nodes have a constant approximation of $|C_v|$, and knowledge of $n$ (a polynomial estimate) and the number of channels $\mathcal{F}$, 
they can each determine the completion time of each procedure.

The time bound can be obtained by the execution time of each procedure. We next show that each node in $C_v$ gets a different color and the total number of colors used is $O(\Delta)$. 

By Lemma~\ref{le:intraco}, each follower can send its ID to a reporter with high probability. We claim that each follower transmits its ID to only one reporter. This follows from Lemma~\ref{le:tdma}. By this Lemma, once a follower transmits a message to a reporter, it will receive an ack message in the same round. Hence, the sets of followers of reporters are disjoint. With this claim, we can see that the aggregation tree on all nodes in $C_v$ is correctly constructed in the first procedure, i.e., every node is in the tree and has exactly one parent. Then in the second procedure, each reporter will get the exact number of nodes in its subtree by the analysis in Theorem~\ref{th:da}. Based on this knowledge and because the aggregation tree is correctly constructed, after the third procedure, reporters will get disjoint color ranges and the number of colors used is $|C_v|\in O(\Delta)$. Hence, after the fourth procedure, each node will get a different color.
\end{proof}
\begin{theorem}\label{th:nodecoloring}
A proper coloring with $O(\Delta)$ colors can be computed in $O(\frac{\Delta}{\mathcal{F}}+\log n\log\log n)$ rounds with high probability. 
\end{theorem}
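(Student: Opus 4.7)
The plan is to plug Lemma~\ref{le:intracloringtime} into the aggregation structure produced by Theorem~\ref{th:tree}, and then verify that the resulting coloring is globally proper thanks to the cluster-color offset scheme. First, assuming (or first obtaining) a $\log^{O(1)}n$-approximation of $\Delta$ so that Theorem~\ref{th:tree} yields the aggregation structure in $O(\Delta/\mathcal{F}+\log n\log\log n)$ rounds, I would build the dominating set, the cluster coloring with $\phi$ colors, and the reporter tree in each cluster, along with the TDMA schedule described in Sec.~\ref{sec:tdma}.

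Second, under the TDMA schedule I would run the four-procedure coloring algorithm of Sec.~\ref{sec:apps} in all clusters of the same cluster color in parallel. By Lemma~\ref{le:intracloringtime}, this takes $O(\Delta/\mathcal{F}+\log n\log\log n)$ rounds and assigns the nodes in each cluster $C_v$ distinct colors drawn from the sequence $\{k\phi+i_v : k=0,1,\ldots,|C_v|-1\}$, where $i_v\in\{1,\ldots,\phi\}$ is $v$'s cluster color. The largest color used in $C_v$ is at most $\phi\cdot|C_v|\leq\phi(\Delta+1)=O(\Delta)$, so the global palette has size $O(\Delta)$.

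Third, I would argue properness across clusters. Take any communication-graph edge $\{u,w\}$ with $u\in C_v$ and $w\in C_{v'}$; if $v=v'$, Lemma~\ref{le:intracloringtime} already guarantees that $u$ and $w$ get different colors. Otherwise, the triangle inequality gives $d(v,v')\leq d(v,u)+d(u,w)+d(w,v')\leq r_c+R_\epsilon+r_c$, and since $r_c\leq\epsilon R_T/4$, this is at most $(1-\epsilon/2)R_T=R_{\epsilon/2}$. Hence $v$ and $v'$ were assigned distinct cluster colors $i_v\neq i_{v'}$ by Lemma~\ref{le:coloring}, and so the colors $k\phi+i_v$ and $k'\phi+i_{v'}$ received by $u$ and $w$ differ modulo $\phi$, hence are distinct.

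A union bound over the failure probabilities of Theorem~\ref{th:tree} and Lemma~\ref{le:intracloringtime} then yields the high-probability claim, and summing the two round bounds gives the overall $O(\Delta/\mathcal{F}+\log n\log\log n)$ cost. The step I expect to be the most delicate is the cross-cluster properness argument: the geometric calculation $2r_c+R_\epsilon\leq R_{\epsilon/2}$ is what makes the mod-$\phi$ offsetting valid, and it depends crucially on the specific choice of $r_c$ made in Sec.~\ref{sec:clustering}. Everything else amounts to bookkeeping on top of the two main prior results.
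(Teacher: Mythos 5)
Your proposal is correct and follows essentially the same route as the paper's own proof: the time bound and intra-cluster properness come from Lemma~\ref{le:intracloringtime}, the color count is $\phi\cdot O(\Delta)=O(\Delta)$, and cross-cluster properness rests on exactly the same geometric bound $2r_c+R_\epsilon\le\epsilon R_T/4+R_\epsilon+\epsilon R_T/4=R_{\epsilon/2}$ forcing neighboring clusters to have distinct cluster colors and hence disjoint color sets. Your write-up is if anything slightly more explicit than the paper's about the structure-construction cost and the mod-$\phi$ disjointness, but there is no substantive difference.
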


\begin{proof}
The total time used for the coloring is given in Lemma~\ref{le:intracloringtime}. By the algorithm, it is easy to see that the total number of colors used is $\phi\cdot O(\Delta)\in O(\Delta)$.

We next show the correctness of the coloring algorithm. For any two neighboring nodes $u,v$ that are in different clusters, their dominators have distance at most $\epsilon R_T/4+R_{\epsilon}+\epsilon R_T/4=R_{\epsilon/2}$. By the algorithm, the color sets given to the clusters in which $u,v$ stay are disjoint. Hence, $u,v$ will not get the same color. For any pair of neighboring nodes in the same cluster, they will also be assigned different colors by Lemma~\ref{le:intracloringtime}.
\end{proof}


\begin{appendix}

\section{Cluster Size Approximation with Small $\hat{\Delta}$}

When the contention is known to be small relative to the number of channels, we can reduce the time complexity for computing the cluster size. Here we consider the case that $\hat\Delta\leq\mathcal{F}\log^{c}n$ for some constant $c\geq1$.

\textbf{Algorithm.} For each cluster $C_v$, the algorithm consists of four procedures:

1. Initially, each dominatee in $C_v$ selects a channel from $\mathcal{F}$ uniformly at random. 
On each channel, the nodes selecting the channel elect a leader by executing the ruling-set algorithm given in Sec.~\ref{sec:mis}. 
This procedure consists of $\gamma_3\ln n$ rounds, where $\gamma_3$ is set to be a sufficiently large constant such that there are enough rounds for the execution of the algorithm in Sec.~\ref{sec:mis}. 

2. On each channel, nodes execute the CSA Algorithm with $\hat\Delta = \gamma_3\ln^cn$, where the leader functions as the dominator on the channel. 

3. The leaders aggregate the number of nodes that selected the channels they dominate. This procedure consists of $O(\log \mathcal{F})$ rounds. In particular, denote by $U_v=\{x_1,\ldots,x_\mathcal{F}\}$ the set of leaders in cluster $C_v$. Note that there may be some channels without nodes selecting it and thus without leaders elected on them. Hence, there may be some nodes $x_i$ missing. For each channel that does not have nodes, we add an auxiliary node, and it will be introduced how to deal with these auxiliary nodes in the aggregation process.

We first construct a binary tree on these $\mathcal{F}$ nodes rooted at the dominator using the same manner as the reporter tree construction in Sec.~\ref{sec:reportertreecon}. Then we use the data aggregation algorithm on the reporter tree given in Sec.~\ref{sec:aggreg} to aggregate the number of nodes to the dominator. Specifically, we need to handle here the auxiliary nodes. The solution is to divide each slot in each round into two sub-slots (recall that there are two slots in each round for the data aggregation on reporter trees), and make a parent send the ack message when it receives a message from its children. For each node $x_j$ transmits, if it does not receive the ack message from its parent, which means that its parent is an auxiliary node, $x_j$ will function as its parent in the subsequent aggregation process.


4. Finally, in a single round, $v$ broadcasts the estimate of the cluster size to its dominatees on the first channel. 


\textbf{Analysis.} 


\textbf{Proof of Lemma~\ref{lem:low-content-size1}.} Consider a cluster $C_v$. We analyze the four procedures one by one. We first bound the number of nodes operating on each channel in the first procedure.
\begin{claim}
For a cluster $C_v$, in the first procedure, there are at most $2\ln^c n$ nodes on each channel with probability $1-n^{-2}$.
\end{claim}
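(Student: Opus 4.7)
The plan is a direct Chernoff-plus-union-bound calculation, leveraging the fact that the channel selections of different dominatees are mutually independent. I would fix a channel $F_i \in \{1,\dots,\mathcal{F}\}$ and let $X_i$ denote the number of dominatees in $C_v$ that select $F_i$ in the first procedure. Since each dominatee makes an independent, uniformly random choice among the $\mathcal{F}$ channels, $X_i$ is a sum of $|C_v|$ i.i.d.\ Bernoulli$(1/\mathcal{F})$ variables, with mean
\[ \mu_i := \mathbb{E}[X_i] \;=\; |C_v|/\mathcal{F} \;\leq\; \hat\Delta/\mathcal{F} \;\leq\; \log^c n, \]
using the hypothesis $\hat\Delta \leq \mathcal{F}\log^c n$.

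Next I would apply the Chernoff bound (\ref{eq:Chernoff1}) to a Binomial variable that stochastically dominates $X_i$ (namely, one with mean exactly $\log^c n$), obtaining
\[ \Pr[X_i \geq 2\log^c n] \;\leq\; e^{-\log^c n/3}. \]
Since $c \geq 1$, the exponent $\log^c n/3$ dominates $3\ln n$ once the constant factor between $\log$ and $\ln$ is absorbed, so the right-hand side is at most $n^{-3}$. A union bound over the $\mathcal{F} \leq n$ channels then gives probability at least $1 - \mathcal{F}\cdot n^{-3} \geq 1 - n^{-2}$ that every channel receives at most $2\ln^c n$ dominatees, proving the claim.

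The only mild subtlety is handling the $\log$ versus $\ln$ discrepancy between hypothesis and conclusion and ensuring the Chernoff tail decays as a sufficiently large polynomial in $n$; both are harmless because $c \geq 1$ and the relevant multiplicative constants can be folded into $c$. No substantive obstacle is expected beyond verifying that the upper bound on $\mathbb{E}[X_i]$ legitimately allows using $\log^c n$ in place of the true mean in the Chernoff inequality, which is standard stochastic domination.
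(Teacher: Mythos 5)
Your proposal is essentially the paper's own proof: bound the expected load per channel by $\hat\Delta/\mathcal{F}\le\log^c n$, apply the upper-tail Chernoff bound to get a failure probability of $e^{-\Omega(\log^c n)}$ per channel, and union-bound over the at most $n$ channels. If anything you are slightly more careful than the paper (which cites the lower-tail bound (\ref{eq:Chernoff3}) where the upper-tail bound (\ref{eq:Chernoff1}) is clearly intended, and glosses over the stochastic-domination step); both arguments share the same mild looseness at $c=1$, where $e^{-\ln^c n/3}$ is only $n^{-1/3}$ rather than $n^{-3}$.
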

\begin{proof}
Because dominatees select channels uniformly at random, the expected number of dominatees selecting each channel is at most $\ln^cn$. Consider a channel $F$. Using Chernoff bound (\ref{eq:Chernoff3}), we get that the number of dominatees selecting $F$ is at most twice the expectation, with probability $1-n^{-3}$. By the union bound on all channels, the result follows. 
\end{proof}

A channel $F$ is \emph{nonempty} with respect to a cluster $C_v$ if there are dominatees in $C_v$ selecting it in the first procedure. Using a similar argument for proving Lemma~\ref{le:hies}, we have the following result for the first procedure.

\begin{claim}\label{le:fir}
For each cluster and each nonempty channel $F$, exactly one leader is elected on $F$ in $O(\log n)$ rounds,
with probability $1-n^{-2}$.
\end{claim}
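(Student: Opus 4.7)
The plan is to reduce the Claim to the correctness guarantee of the ruling-set algorithm (Lemma~\ref{le:MISresult}), combined with the observation that a cluster has bounded diameter. First, I would fix a ruling-set radius $r$ with $2r_c \le r \le R_T/2$; this is feasible because $r_c = \min\{\tfrac{t}{2t+2}R_{\epsilon/2},\tfrac{\epsilon R_T}{4}\}$, so in particular $2r_c \le R_T/2$. The ruling-set algorithm of Sec.~\ref{sec:mis} invoked on the nodes of a single channel $F$ then returns, with probability $1-n^{-c}$ (for any desired constant $c$, by tuning $\gamma$), an $(r,2r)$-ruling set among those nodes in $O(\log n)$ rounds.

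Next I would argue that within any one cluster $C_v$, this ruling set restricted to $F$ has \emph{exactly} one element whenever $F$ is nonempty with respect to $C_v$. The cluster lies inside an $r_c$-ball around $v$, so any two dominatees of $C_v$ are at distance at most $2r_c \le r$; hence the $r$-independence of the ruling set forces at most one leader in $C_v\cap F$. Conversely, the $2r$-domination property guarantees that some node of $C_v\cap F$ enters the ruling set whenever $C_v\cap F$ is nonempty, so we get at least one leader. Combined, this yields exactly one leader on each nonempty channel per cluster.

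For the inter-cluster executions on the same channel I would appeal to the fact that the outer algorithm operates the clusters according to the TDMA scheme of Sec.~\ref{sec:tdma} (under the standing well-separatedness assumption), so that transmissions in different clusters of the same color do not interfere with the SINR reception analysis behind Lemma~\ref{le:MISresult}. Finally, I would apply a union bound over the at most $\mathcal{F}\cdot n \le n^{O(1)}$ (cluster, channel) pairs; choosing the hidden constant in Lemma~\ref{le:MISresult} large enough (e.g., so that each execution fails with probability at most $n^{-4}$) yields the overall failure probability of at most $n^{-2}$.

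The only subtle point I anticipate is the choice of the parameter $r$: it must be simultaneously large enough ($r\ge 2r_c$) so that all of $C_v$ is an $r$-clique, and small enough ($r\le R_T/2$) so that Lemma~\ref{le:MISresult} applies. This is the one place where the constants fixed in Sec.~\ref{sec:clustering} have to be checked; everything else is a direct application of Lemma~\ref{le:MISresult} together with a union bound.
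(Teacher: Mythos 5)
Your proposal is correct and follows essentially the same route as the paper: the paper's proof simply says the claim follows by the argument of Lemma~\ref{le:hies}, which likewise reduces everything to the ruling-set guarantee of Lemma~\ref{le:MISresult} (your restriction to nonempty channels replaces the Chernoff step used there to show each channel is occupied). You merely make explicit what the paper leaves implicit, namely the choice of the ruling-set radius $r$ with $2r_c\le r\le R_T/2$ so that $r$-independence forces at most one leader per cluster while $2r$-domination (together with the well-separation of same-color clusters) forces at least one.
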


Using a similar argument for proving Lemma~\ref{lem:clustersize-approx1}, we have the following result for the second procedure.
\begin{claim}\label{le:sec}
Each leader in each cluster
can get an absolute constant approximation of the number of dominatees selecting its channel in $O(\log n\log\log n)$ rounds, with probability $1-n^{-2}$. 
\end{claim}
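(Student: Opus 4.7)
The plan is to reduce this claim directly to Lemma~\ref{lem:clustersize-approx1} applied per-channel, leveraging the first claim as a per-channel cluster-size upper bound. Within each cluster $C_v$ and each nonempty channel $F$, I would view the elected leader $x_F$ as the dominator and the remaining nodes that selected $F$ in procedure~1 as its dominatees, and then run the CSA algorithm of Lemma~\ref{lem:clustersize-approx1} on each channel in parallel with the input upper bound $\hat\Delta := \gamma_3 \ln^c n$ (taking $\gamma_3 \ge 2$ so that the first claim's bound of $2\ln^c n$ nodes per channel is dominated by $\hat\Delta$ with probability $1-n^{-2}$).

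Before invoking the lemma, I would verify that its hypotheses still hold in this per-channel setting. The only substantive requirement is that Lemma~\ref{le:protran} gives a constant success probability for a listener-leader pair. Every node that selected $F$ lies in $C_v$, hence within distance $r_c$ of the original dominator $v$, and so within distance $2r_c \in O(R_T)$ of the elected leader $x_F$; therefore the leader plays the same role geometrically as a cluster dominator. Inter-cluster interference is eliminated by the TDMA coloring of the well-separated clusters (Lemma~\ref{le:tdma}), while inter-channel interference is zero by the multi-channel model. Since each channel of each cluster runs an independent instance of the CSA algorithm with contention governed solely by the nodes on that channel, the analysis of Lemma~\ref{lem:clustersize-approx1} transfers verbatim, and the leader obtains a constant approximation of the count of nodes on its channel with probability $1 - n^{-3}$.

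The time bound follows by direct substitution. Lemma~\ref{lem:clustersize-approx1} runs in $O(\log \hat\Delta \cdot \log n)$ rounds; with $\hat\Delta = \gamma_3 \ln^c n$ this evaluates to $O(c \log\log n \cdot \log n) = O(\log n \log\log n)$. A union bound over the at most $\mathcal{F} \le n$ channels per cluster and the at most $n$ clusters pushes the failure probability to $O(n^{-1})$; combining this with the $1-n^{-2}$ event from the first claim that $\hat\Delta$ is a valid upper bound gives the claimed $1 - n^{-2}$ success probability (after adjusting the exponent in $\gamma_3$ and in the Chernoff bounds hidden inside Lemma~\ref{lem:clustersize-approx1}).

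The only genuine subtlety, and what I expect to be the main obstacle in a fully formal write-up, is the coupling between procedures~1 and~2: Lemma~\ref{lem:clustersize-approx1} assumes a fixed known upper bound on cluster size, but here that bound holds only conditionally on the event of the first claim. I would handle this by conditioning on the first-claim event (which has probability $1-n^{-2}$) throughout the analysis of procedure~2, so that within that conditional space $\hat\Delta$ is a valid upper bound and Lemma~\ref{lem:clustersize-approx1} applies verbatim; the final failure probability is the sum of the conditioning failure and the conditional CSA failure, both of which are polynomially small.
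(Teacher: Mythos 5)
Your proposal is correct and follows essentially the same route as the paper, which proves this claim simply by invoking the argument of Lemma~\ref{lem:clustersize-approx1} per channel with the reduced bound $\hat\Delta = \gamma_3\ln^c n$, giving $O(\log\hat\Delta\cdot\log n)=O(\log n\log\log n)$ rounds. Your additional checks (the leader's geometric role, conditioning on the per-channel size bound from the first claim, and the constant adjustments for the union bound) are exactly the details the paper leaves implicit.
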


By Lemma~\ref{le:tdma}, a node will receive an ack message after it sends a message to its parent if its parent is not an auxiliary node. Hence, the auxiliary nodes will not affect the aggregation process in the third procedure. Hence, we have the following result.
\begin{claim}\label{le:thi}
For a cluster $C_v$, the estimates of leaders will be aggregated to the dominator $v$ in $O(\log \mathcal{F})$ rounds.
\end{claim}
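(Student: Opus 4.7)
The plan is to bound the running time by the depth of the heap-ordered binary tree built on the $\mathcal{F}$ positions, and to argue that each level of the convergecast completes in a bounded number of slots using the deterministic reporter-tree aggregation of Section~\ref{sec:aggreg}, whose correctness inside a well-separated cluster is guaranteed by Lemma~\ref{le:tdma}.

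First I would observe that the tree built on $x_1,\ldots,x_\mathcal{F}$ in heap order (with missing positions padded by auxiliary nodes) has depth $\lfloor \log(\mathcal{F}+1)\rfloor$, exactly as in Lemma~\ref{le:treecon}. Since $x_j$ transmits to its parent on channel $\lfloor j/2\rfloor$, and the standard schedule of Section~\ref{sec:aggreg} already alternates odd and even children across two slots, no two real nodes of $C_v$ ever contend on the same channel in the same slot. By Lemma~\ref{le:tdma}, every intended transmission is therefore received deterministically, and each level hands its aggregated value up to the next in $O(1)$ slots.

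To handle auxiliary nodes, I would invoke the mechanism already specified in the algorithm: each slot is split into two sub-slots, so a child learns from a missing ack that its parent is auxiliary, and then takes over the parent's role in the following round. Inductively, a real node whose nearest real ancestor in the heap is $k$ levels above advances by exactly one level per round; since $k \le \lfloor \log(\mathcal{F}+1)\rfloor$ for every real node, all real estimates reach $v$ within $O(\log \mathcal{F})$ rounds.

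The main obstacle will be arguing cleanly that the promotion mechanism neither creates new intra-cluster interference nor duplicates estimates. The key observation is that once $x_j$ assumes its (auxiliary) parent's position, it is the unique active occupant of that position, so the channel-level assignment of the heap scheme remains well defined, Lemma~\ref{le:tdma} still applies, and each real estimate contributes exactly once to the sum collected at $v$. With that in place, the $O(\log \mathcal{F})$ bound follows directly from the tree depth.
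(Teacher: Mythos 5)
Your proposal matches the paper's argument: the bound is the depth $\lfloor\log(\mathcal{F}+1)\rfloor$ of the heap-ordered tree, each level's transmissions succeed deterministically by Lemma~\ref{le:tdma} since distinct channels are used within the cluster, and the ack/sub-slot mechanism lets a child detect an auxiliary parent and take over its role. You are in fact somewhat more explicit than the paper, which simply asserts that auxiliary nodes ``will not affect the aggregation process''; your closing observation about the promoted node being the unique occupant of the vacated position is the one point the paper leaves implicit as well.
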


After the estimates of leaders are aggregated to the dominator, the dominator $v$ will get a constant approximation of the cluster size by Claim~\ref{le:sec}. Then in the fourth procedure, $v$ can send the estimate of the cluster size to all dominatees by Lemma~\ref{le:tdma}. Adding the time used in each procedure, each node in cluster $C_v$ will get a constant approximation of the cluster size in $O(\log n\log\log n)$ rounds with probability $1-O(n^{-2})$. The result is then proved by the union bound.\qed

\end{appendix}

\end{document}